\documentclass[aps,pra,reprint,superscriptaddress,nofootinbib,floatfix,longbibliography]{revtex4-2}
\usepackage[T1]{fontenc}
\usepackage{amsmath,amssymb,amsthm,bm,mathtools,mathrsfs}
\usepackage{graphicx,booktabs,array,tabularx,microtype}
\usepackage[hidelinks]{hyperref}
\hypersetup{colorlinks=true,allcolors=blue}
\graphicspath{{figures/}}
\newcommand{\PaperTitle}{Wavefront-conditioned photon measurements for exoplanet spectroscopy: Contrast limits and conditional gains}
\hypersetup{pdftitle={\PaperTitle},pdfauthor={Slava G. Turyshev},pdfsubject={Causal optical measurement, exact information bounds, and exoplanet spectroscopy}}
\newcommand{\Tr}{\operatorname{Tr}}
\newcommand{\E}{\mathbb{E}}
\newcommand{\ket}[1]{\lvert#1\rangle}
\newcommand{\bra}[1]{\langle#1\rvert}
\newcommand{\ddet}{d_{\rm det}}
\newcommand{\drel}{d_{\rm rel}}
\newcommand{\Gwall}{G_{\rm wall}}
\newcommand{\Cpred}{\mathscr C_{\rm pred}}
\newcommand{\Cplanet}{\mathscr C_{\rm planet}}
\newcommand{\Q}{\mathscr Q}
\newtheorem{theorem}{Theorem}
\newtheorem{proposition}[theorem]{Proposition}
\newcolumntype{Y}{>{\raggedright\arraybackslash}X}

\begin{document}
\title{Wavefront-Conditioned Photon Measurements for Exoplanet Spectroscopy: \\ Contrast Limits and Conditional Gains}
\author{Slava G. Turyshev}
\affiliation{Jet Propulsion Laboratory, California Institute of Technology,
4800 Oak Grove Drive, Pasadena, California 91109-0899, USA}
\date{\today}

\begin{abstract}
A wavefront-sensor (WFS) prediction can select the photon measurement before detection or condition its likelihood afterward. This matters for reflected-light spectra requiring hundreds of hours, because subtracting residual starlight cannot remove its photon noise. In a two-mode, low-occupation model, isotropically distributed leakage makes planet-mode projection---ideal single-mode injection matched to the planet---the optimal fixed single-photon measurement, including unequal aligned backgrounds and equal Poisson detector-event rates per output. For nonzero leakage, an attainable two-output conditioned measurement is optimal and strictly more informative before differential loss. For an isotropic background, a contrast bound holds for any distribution of recorded leakage directions: relative transmission $0.80$ and relative admitted-time duty $0.90$ limit the Fisher information per scheduled time to $0.81$ of the fixed optimum at equal predictable-leakage and planet contrasts. Haar averaging sharpens this ceiling to $0.78$. The necessary leakage-to-planet ratios for gain are approximately $2.25$  for any direction law and $3.26$ for Haar directions. A concentration-model certificate identifies directional diversity sufficient to beat recalibrated fixed measurements. Independently gated fixed-basis fallback guarantees no loss of continuum-profiled spectral information when rates and costs are known. For the Habitable Worlds Observatory, the candidate regime is predictable, directionally diverse, leakage-dominated residuals, not a general sensitivity improvement.
\end{abstract}
\maketitle

\section{Introduction}\label{sec:intro}
Reflected-light characterization of an Earth analog is an information-limited measurement. The Habitable Worlds Observatory (HWO) is motivated by the detection and spectroscopy of potentially habitable planets, with roughly $10^{-10}$-class stellar suppression and very small planetary photon rates~\cite{astro2020,Stark2024}. Exposure-time studies show why a characterization can require hundreds of hours: planet continuum, diffuse foregrounds, detector events, and time-dependent residual starlight all contribute to the uncertainty of a molecular estimator~\cite{Stark2025ETC,Ruffio2026}. Removing an estimated stellar mean does not remove its photon fluctuations. Nor can post-processing recover spatial coherence that the optical measurement discarded before photon counting.

Wavefront sensing and control already use predictive information to reduce stellar leakage~\cite{Tesch2026,Pogorelyuk2021,Potier2022}. The question here concerns the residual field \emph{after the same upstream controller has acted}. If a strictly prior wavefront-sensor (WFS) record predicts its spatial direction, can that record improve the science measurement by selecting an optical basis before the next photons arrive? The comparison must give the fixed receiver the same photon budget, controlled field, and prior record in its likelihood. Otherwise an apparent receiver advantage could be an advantage in sensing or calibration rather than in measurement.

The candidate residual is observable and predictable but not adequately rejected by the common controller, for example because the controlling actuator has insufficient bandwidth or cannot access the relevant optical subspace~\cite{Pogorelyuk2021,Potier2022}. A downstream analyzer changes the measurement of that residual rather than its incident power. It is useful only if the retained information exceeds the cost of its own attenuation, detector events, and switching dead time; predictability alone is not a sensitivity gain.

The distinction is not between quantum and classical hardware. Spatial-mode measurements can retain information lost by direct imaging~\cite{Tsang2016}, and quantum-optimal coronagraphy has been developed theoretically and demonstrated in laboratory mode-sorting experiments~\cite{Deshler2025,Deshler2026}. Recent analyses include finite stellar diameter, known planet locations, complex pupils, and candidate HWO mode-sorting and multi-plane light-conversion designs~\cite{Xin2026,XinHWO2026,Haffert2026}. The device considered here likewise needs only passive linear optics, a programmable two-mode unitary, and photon counting. A classical optical instrument implementing the same conditioned unitary is the proposed receiver, not an excluded competitor.

WFS-conditioned post-detection inference is also established~\cite{Frazin2013,Rodack2021,XinWFE2024}. More generally, quantum information theory distinguishes side information available before and after a measurement~\cite{Ballester2008,Carmeli2018}. Adaptation from accumulated science photons, as considered by Choi \emph{et al.}~\cite{Choi2026}, is a different timing resource. Here the setting is a known planet and an independent, earlier WFS record, not scene discovery from the same low-flux science stream. A broader discussion of how measurement-level gains propagate to astrophysical inference is given in Ref.~\cite{Turyshev2026Benchmarks}.

The reference measurement is physically recognizable: its informative port is ideal single-mode injection matched to the known planet point-spread function, while the orthogonal output is also counted. For a unitarily invariant, or Haar, ensemble of two-mode leakage directions, this planet projection is the exact global fixed optimum. Haar isotropy describes the relative complex modal amplitudes, not an isotropic sky brightness or an empirical HWO disturbance distribution.

The astronomical interpretation is controlled by photons sharing the planet mode. Planet continuum fluctuates in the same spatial mode as the molecular derivative and cannot be rejected without also losing signal. For an isotropic optical floor, a direction-distribution-independent contrast bound quantifies this restriction: with relative transmission $0.80$ and relative admitted-time duty $0.90$, equal predictable-leakage and planet contrasts imply at most $0.81$ of the optimal fixed receiver's information per scheduled time. Haar averaging tightens the ceiling to $0.779$ but is not needed for the exclusion. Both statements retain the two-mode conditional-Poisson model and matched detector rates. A possible opportunity is instead a localized, leakage-dominated modal pair during degraded suppression or at a difficult working angle; its direction must remain predictable while varying enough that a recalibrated fixed basis cannot recover the same information.

Five results organize the analysis. Theorem~\ref{thm:fixed} gives the global fixed optimum, including known unequal aligned optical backgrounds. Section~\ref{sec:noisy_optimum} proves that the two-output symmetric logarithmic derivative (SLD) measurement attains the conditioned optimum after scalar loss and matched detector charges. Propositions~\ref{prop:ceiling} and~\ref{prop:direction_free} bound that optimum by modal contrast, with and without Haar averaging. Equation~\eqref{eq:diversity_lower} certifies gains against arbitrary fixed measurements recalibrated at the specified cadence in a concentration model. Equation~\eqref{eq:fallback} gives the independent-channel fallback corollary for continuum-profiled spectral information. The conditional count likelihood is specified in Sec.~\ref{sec:model}; Secs.~\ref{sec:discussion} and~\ref{sec:numerics} connect these results to controlled fields and reproducible numerical tests; Sec.~\ref{sec:conclusions} discusses results obtained  and concludes.

\section{Causal measurement and statistical model}\label{sec:model}
\subsection{Common resources and the meaning of causality}\label{sec:causal_resources}
A measurement is causal here in the signal-processing sense: its setting may depend only on information available before the detected photon. Figure~\ref{fig:architecture} shows the common test architecture. A unitary modal mapper feeds a programmable two-mode analyzer, both output ports are counted, and the event record stores the prior WFS data, the latched command, and the relevant transfer-matrix calibration. A predeclared policy selects either a stored fixed basis or a basis calculated from the prior WFS record. The downstream likelihood retains the same record under both policies.

\begin{figure*}[t]
\centering
\includegraphics[width=0.98\textwidth]{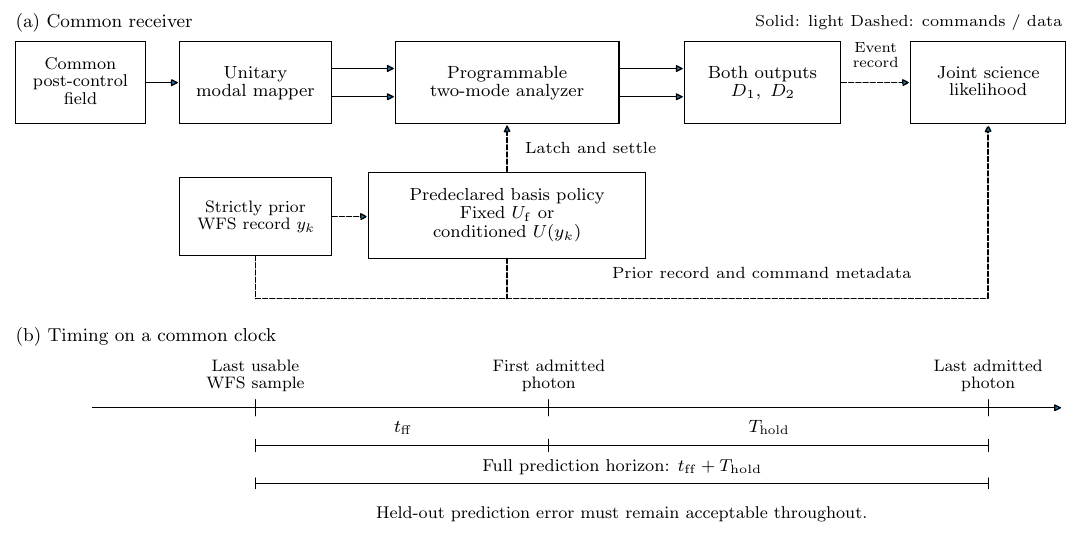}
\caption{Common receiver and causal timing. (a) Both policies share the mapper, programmable analyzer, and two counted outputs. Solid paths carry light; dashed paths carry commands or recorded data. The prior wavefront-sensor (WFS) record selects $U(y_k)$ only under the conditioned policy, but enters both likelihoods with the latched command and calibration identifiers. (b) Admission begins after latching and settling; held-out prediction error must remain acceptable through the last admitted photon. Equal calibrated losses give $\tau=1$; the analysis also permits measured differential attenuation. The diagram is a timing contract, not a device-performance measurement.}
\label{fig:architecture}
\end{figure*}

For interval $k$, let $y_k$ contain time-stamped WFS measurements and upstream-control commands available before the last usable sample time $t_{z,k}^{\rm last}$. Define
\begin{align}
 t_{\rm ff}&=\Delta t_{\rm lat}+\Delta t_{\rm solve}
             +\Delta t_{\rm switch}+\Delta t_{\rm margin},\nonumber\\
 t_{z,k}^{\rm last}+t_{\rm ff}&<t_k^{\rm open}.
 \label{eq:causality}
\end{align}
All times are in seconds ($\mathrm s$). The feed-forward delay $t_{\rm ff}$ includes transport latency, basis calculation, switching and settling, and a timing margin. If the basis is held for $T_{\rm hold}$ after opening, its prediction must remain valid over the entire photon-admission interval, not just at its first photon. In particular, the largest prediction horizon is approximately $t_{\rm ff}+T_{\rm hold}$. Its error is measured on held-out data; an estimate smoothed with photons or WFS samples from the same interval is not a causal command.

The analyzer update interval $T_{\rm upd}$ sets the command cadence; the fixed-basis calibration interval $T_{\rm cal}$ sets how long a stored basis is used before recalibration. All photon budgets below use physical rates per second. A unitary optical matrix $U$ is dimensionless; the fixed and conditioned settings are $U_{\rm f}$ and $U(y_k)$. The two detectors are labeled $D_1$ and $D_2$ in Fig.~\ref{fig:architecture}.

\subsection{Rate operators and conditional counts}
In one wavelength--polarization block, let $P=\ket p\bra p$ be the known, normalized planet mode and let $S=\ket s\bra s$ be the predicted normalized stellar-leakage mode. These dimensionless rank-one projectors describe relative modal amplitudes and phase, not optical power or sky position. The identity $I$ acts on the two selected orthogonal spatial modes. The reduced optical rate operator is
\begin{align}
 R_s(c)&=B_s+cP,\nonumber\\
 B_s&=\beta I+AS+CP
     =\beta(I+rS+\gamma P),\label{eq:reduced}\\
 r&=A/\beta,\qquad \gamma=C/\beta.\nonumber
\end{align}
Here $\beta>0$ is an irreducible optical rate \emph{per mode}, $A\ge0$ is the total predictable leakage rate in the selected pair, and $C\ge0$ is the planet-continuum rate. The additive local spectral coordinate $c$, as well as $A$, $C$, and $\beta$, has units photon $\mathrm{s}^{-1}$. Photon number is dimensionless in the information calculations; all rates can therefore be expressed in $\mathrm{s}^{-1}$. The ratios $r$ and $\gamma$ are dimensionless. A neighborhood of the fiducial point is restricted to positive physical rates. The signed molecular parameter is introduced in Sec.~\ref{sec:spectroscopy}.

A positive-operator-valued measurement (POVM) $\{\Pi_m\}$ specifies dimensionless effects $\Pi_m\succeq0$ with $\sum_m\Pi_m=I$. Here $\succeq0$ means positive semidefinite, $\Tr$ denotes the matrix trace, and a dagger denotes Hermitian conjugation. Low occupation means that the mean photon number per collected optical spatiotemporal mode is much smaller than one. We adopt a conditional Poisson rare-event model with independent output increments: for an admitted duration $t$, the counts obey
\begin{equation}
 n_m\mid s\sim\operatorname{Pois}\!\left[t\{\Tr(\Pi_mR_s(c))+\ddet\}\right].
 \label{eq:poisson}
\end{equation}
The notation $\operatorname{Pois}[\lambda]$ denotes a Poisson distribution with dimensionless mean $\lambda$. The known stationary detector-event rate $\ddet\ge0$, in events $\mathrm{s}^{-1}$, is charged to each \emph{physical detector output}, not in proportion to $\Tr\Pi_m$. The fixed receiver retains $s$ in its likelihood but cannot select $\Pi_m$ from $s$ within the fixed interval. The conditioned receiver can select $\Pi_m(s)$. The WFS record is assumed ancillary for the local planetary parameter: its probability law does not itself depend on $c$. A common parameter-sensitive WFS score would have to be added to both branches and would change a gain ratio.

Where an exact ensemble average is required, $s$ is taken to be Haar distributed in $\mathbb C^2$, equivalently uniform in its Bloch-sphere representation. The dimensionless overlap $u=|\langle p|s\rangle|^2$ is then uniform on $[0,1]$. This is a solvable model of modal diversity, not a distribution inferred from an HWO disturbance spectrum. Proposition~\ref{prop:direction_free} does not require this direction law. The receiver class consists of measurements of individual photons, with no quantum memory between detections and any finite number of outcomes. Collective measurements, a phase-referenced local oscillator, and persistent unresolved latent intensity are outside Eq.~\eqref{eq:poisson}; the latter is discussed in Sec.~\ref{sec:latent}.

\subsection{Information per admitted and scheduled time}
At $c=0$, the local Fisher information per unit \emph{admitted} time is
\begin{equation}
 J(s;\Pi)=\sum_m\frac{[\Tr(P\Pi_m)]^2}{\Tr(B_s\Pi_m)+\ddet}.
 \label{eq:fi}
\end{equation}
Because $c$ is a rate, $J$ has units seconds, while $tJ$ has units $\mathrm{s}^2$. For a dimensionless amplitude with derivative $gP$, where $g$ is a rate, its information rate is $g^2J$ and has units $\mathrm{s}^{-1}$. We use $J_{\rm f}$ and $J_{\rm c}$ for fixed and conditioned information rates, respectively; their units depend on the parameter being estimated. The symbol $\E_s$ denotes expectation over the declared direction law, or the equivalent weighted average along a recorded direction history.

The fair pre-/postmeasurement comparison is
\begin{equation}
 J_{\rm pre}=\E_s\!\left[\sup_{\Pi(s)}J(s;\Pi(s))\right],\qquad
 J_{\rm post}=\sup_\Pi\E_s[J(s;\Pi)].
 \label{eq:prepost}
\end{equation}
The ordering $J_{\rm pre}\ge J_{\rm post}$ is immediate; strictness and the optimal fixed measurement are not. Their calculation is the purpose of Sec.~\ref{sec:measurement}.

Let $d_{\rm c}$ and $d_{\rm f}$ be the admitted science-time fractions per scheduled wall time. The relative duty and local wall-information ratio are
\begin{equation}
 \drel=\frac{d_{\rm c}}{d_{\rm f}},\qquad
 \Gwall=\frac{d_{\rm c}J_{\rm c}}{d_{\rm f}J_{\rm f}}.
 \label{eq:wallgain}
\end{equation}
The ratio $\drel$ can exceed one, although each absolute duty is at most one. At fixed local significance, with the same identifiable model and no non-scaling prior, a dimensionless information ratio $G$ multiplies the fixed-time signal-to-noise ratio (SNR) by $\sqrt G$ and gives an asymptotic exposure reduction $1-G^{-1}$. It is not itself a finite-count completeness calculation. The duties in scalar formulas below are assumed independent of the instantaneous direction; state-dependent rejection requires averaging the admitted information with its actual selection probabilities.

\section{Exact fixed optimum and conditioned information}\label{sec:measurement}
\subsection{A fixed optimum that includes aligned anisotropy}
The fixed optimum can be proved in a slightly more general model than Eq.~\eqref{eq:reduced}. Write
\begin{equation}
 B_s=b_{\parallel}P+b_{\perp}P_{\perp}+AS,\qquad
 P_{\perp}=I-P,
 \label{eq:aligned}
\end{equation}
where $b_{\parallel},b_{\perp}>0$ are known optical floors in $\mathrm{s}^{-1}$. The planet continuum is included in $b_{\parallel}$; the isotropic case has $b_{\parallel}=\beta+C$ and $b_{\perp}=\beta$.

\begin{theorem}[Optimal fixed measurement]\label{thm:fixed}
For the Haar ensemble in Eq.~\eqref{eq:aligned}, with derivative $P$ and equal known detector rates $\ddet$ per physical output, the globally optimal fixed memoryless single-photon POVM is $\{P,P_{\perp}\}$. For arbitrary finite outcome number,
\begin{equation}
 J_{\rm f}^{\star}
 =\frac1A\ln\!\left(1+\frac{A}{b_{\parallel}+\ddet}\right),
 \label{eq:fixed_general}
\end{equation}
with continuous value $(b_{\parallel}+\ddet)^{-1}$ at $A=0$.
\end{theorem}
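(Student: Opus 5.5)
The plan is to find a per-outcome upper bound that is linear in $\Tr(P\Pi_m)$ and is attained by $\{P,P_\perp\}$. Since $\sum_m\Tr(P\Pi_m)=\Tr P=1$, summing such a bound gives Eq.~\eqref{eq:fixed_general} at once. Concretely, with $J^\star\equiv A^{-1}\ln(1+a)$ and $a\equiv A/(b_\parallel+\ddet)$, I will show that every effect $\Pi\succeq0$ with $\Pi\preceq I$ and $x\equiv\Tr(P\Pi)$ satisfies
\begin{equation*}
 \E_s\!\left[\frac{x^2}{\Tr(B_s\Pi)+\ddet}\right]\le x\,J^\star .
\end{equation*}

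Attainability is checked first. For the outcome $P$ the denominator is $b_\parallel+Au+\ddet$, with $u=|\langle p|s\rangle|^2$. Under the Haar law $u$ is uniform on $[0,1]$, so the expectation is $\int_0^1 du/(b_\parallel+\ddet+Au)=J^\star$. For the outcome $P_\perp$ the numerator vanishes and the contribution is zero. The limit $A\to0$ gives the stated continuous value $(b_\parallel+\ddet)^{-1}$.

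For the upper bound, I first reduce to rank-one pieces without creating extra detector charges. This matters because $\ddet$ is charged per physical output, so naively splitting $\Pi$ into rank-one outcomes would change the model. Write the spectral decomposition $\Pi=\sum_j w_j\ket{\phi_j}\bra{\phi_j}$ with $0\le w_j\le1$. Set $x_j=w_jv_j$ with $v_j=|\langle p|\phi_j\rangle|^2$, and $y_j(s)=\Tr(B_s\,w_j\ket{\phi_j}\bra{\phi_j})$. Split the single charge as $d_j=\ddet\,x_j/x$, so that $\sum_j d_j=\ddet$. The Cauchy--Schwarz (Sedrakyan) inequality, applied pointwise in $s$, gives
\begin{equation*}
 \frac{(\sum_j x_j)^2}{\sum_j y_j+\ddet}\le\sum_j\frac{x_j^2}{y_j+d_j}.
\end{equation*}
Because $x\le1$, each piece carries $d_j\ge \ddet x_j$.

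For a single piece I then lower-bound the denominator. The aligned floor satisfies $w_j(b_\parallel v_j+b_\perp(1-v_j))\ge b_\parallel x_j$, so the denominator is at least $x_j(b_\parallel+\ddet)+Aw_jt$, where $t=|\langle s|\phi_j\rangle|^2$. Haar invariance makes $t$ uniform on $[0,1]$ for any fixed unit vector $\phi_j$. The expectation is therefore explicit:
\begin{equation*}
 \frac{x_j^2}{Aw_j}\ln\!\Big(1+\frac{a}{v_j}\Big)=\frac{x_j}{A}\,v_j\ln\!\Big(1+\frac{a}{v_j}\Big).
\end{equation*}
The function $h(v)=v\ln(1+a/v)$ is increasing on $(0,1]$, since $h'(v)=\ln(1+a/v)-a/(v+a)>0$ by $\ln(1+z)>z/(1+z)$. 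Hence $h(v_j)\le h(1)=\ln(1+a)$, and each piece contributes at most $x_jJ^\star$. Summing over $j$ and then over $m$ gives $\sum_m\E_s[\cdot]\le J^\star\sum_m\Tr(P\Pi_m)=J^\star$. This holds for every finite outcome number, so $\{P,P_\perp\}$ is globally optimal.

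The step I expect to be the main obstacle is the handling of the per-output detector charge. One cannot simply invoke convexity or extremality of rank-one POVMs, because refining outcomes changes the total number of charges $\ddet$. The proportional split $d_j=\ddet x_j/x$ inside Cauchy--Schwarz is the device that keeps the charge fixed per physical output. The inequality $x\le1$ then lets each piece absorb at least its share $\ddet x_j$. I would double-check two points: that this split stays valid when some $x_j=0$ (such pieces contribute nothing and can be dropped), and that the direction of each inequality is consistent with $b_\perp$ possibly being smaller than $b_\parallel$. The bound only uses $b_\perp\ge0$, so it holds for unequal aligned backgrounds, as the theorem claims.
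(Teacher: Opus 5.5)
Your proof is correct, but it reaches the effectwise bound by a different mechanism than the paper.

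The paper never splits an effect. It parametrizes a general effect by its eigenvalues $\lambda_\pm$ and planet overlap $a=\Tr(P\Pi)$, and uses $\langle s|\Pi|s\rangle=\lambda_-+(\lambda_+-\lambda_-)z$ with $z$ uniform. Three monotonicity moves (raise $a$ to $\lambda_+$, lower $\lambda_-$ to $0$, raise $\lambda_+$ to $1$) then give $a/D\le(b_{\parallel}+\ddet+Az)^{-1}$ for every $z$. The single per-output charge $\ddet$ is carried intact throughout, so no refinement question ever arises.

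You instead refine the effect spectrally and keep the charge budget honest with the proportional split $d_j=\ddet x_j/x$ inside Sedrakyan's inequality, where $x\le1$ supplies $d_j\ge\ddet x_j$. This reduction is pointwise in $s$ and does not depend on the direction law, so it isolates exactly where the per-output detector model enters. The rank-one form also makes the mechanism visible. A piece misaligned with $\ket p$ has leakage weight $w_j=x_j/v_j$ exceeding its planet share, so its effective leakage-to-floor ratio is $A/[v_j(b_{\parallel}+\ddet)]$, larger than for the planet port. The paper's route, by contrast, handles a possibly full-rank effect directly, with no auxiliary inequality.

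Your final calculus step is dispensable. For every $t$,
$x_j^2/[x_j(b_{\parallel}+\ddet)+Aw_jt]=x_j/(b_{\parallel}+\ddet+At/v_j)\le x_j/(b_{\parallel}+\ddet+At)$,
since $0<v_j\le1$. This recovers the paper's pointwise domination before any integration, so you do not need the monotonicity of your $h(v)$.
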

\begin{proof}
For one effect $\Pi$, let its eigenvalues be $0\le\lambda_-\le\lambda_+\le1$, its trace be $v=\lambda_-+\lambda_+$, and its planet overlap be $a=\Tr(P\Pi)\in[\lambda_-,\lambda_+]$. Haar invariance gives
\begin{equation}
 \langle s|\Pi|s\rangle=\lambda_-+(\lambda_+-\lambda_-)z,\qquad
 z\sim\operatorname{Uniform}(0,1).
 \label{eq:haar_effect}
\end{equation}
The mean information contribution divided by $a>0$ is $\int_0^1 a/D\,dz$, where
\begin{align}
 D={}&b_{\perp}v+(b_{\parallel}-b_{\perp})a+\ddet\nonumber\\
 &+A[\lambda_-+(\lambda_+-\lambda_-)z].
 \label{eq:effect_denominator}
\end{align}
For fixed eigenvalues and $z$, $a/D$ increases with $a$, since
\begin{equation}
 \frac{\partial(a/D)}{\partial a}
 =\frac{b_{\perp}v+\ddet+A\langle s|\Pi|s\rangle}{D^2}>0.
 \label{eq:effect_derivative}
\end{equation}
This remains true even when $b_{\parallel}<b_{\perp}$. Set $a=\lambda_+$. Then
\begin{equation}
 D=\lambda_+(b_{\parallel}+Az)
   +\lambda_-[b_{\perp}+A(1-z)]+\ddet.
 \label{eq:aligned_denominator}
\end{equation}
The integrand $\lambda_+/D$ decreases with $\lambda_-$, so it is bounded above by its value at $\lambda_-=0$. That value increases with $\lambda_+$, its derivative being
\begin{equation}
 \frac{\ddet}{[\ddet+\lambda_+(b_{\parallel}+Az)]^2}\ge0.
 \label{eq:last_effect_derivative}
\end{equation}
Consequently the contribution of any effect is at most
$a\int_0^1(b_{\parallel}+\ddet+Az)^{-1}dz$.
Completeness gives $\sum_m a_m=1$. Summing establishes Eq.~\eqref{eq:fixed_general}, and the planet projection attains it.
\end{proof}

This effectwise proof avoids refining a physical detector port into several independently noisy ports. It covers unequal aligned floors without a numerical POVM search. Its physical content is simple: the fixed optimum sends the entire local planet derivative to the planet-matched output. The known orthogonal floor does not enter that output's rate. It still affects a conditioned analyzer that mixes the two modes. An unknown floor, a nonaligned covariance, or a different nuisance derivative is a different optimization problem.

\subsection{An explicit attainable conditioned measurement}
For a positive two-mode rate matrix, the SLD for the additive rate $c$ is a Hermitian matrix $L_s$ with units seconds. It solves
\begin{equation}
 B_sL_s+L_sB_s=2P.
 \label{eq:sld}
\end{equation}
Its eigenbasis attains the scalar quantum Fisher-information rate for derivative $P$~\cite{BraunsteinCaves1994}. Here it is useful to give the solution explicitly. In the planet basis, put
\begin{equation}
 B_s=\begin{pmatrix}a_s&z_s\\z_s^*&v_s\end{pmatrix},\quad
 \mathcal T_s=a_s+v_s,\quad
 \Delta_s=a_sv_s-|z_s|^2.
 \label{eq:matrix_invariants}
\end{equation}
The entries $a_s,v_s,z_s$ and trace $T_s$ have units $\mathrm{s}^{-1}$, and the determinant $\Delta_s$ has units $\mathrm{s}^{-2}$. Direct substitution gives
\begin{align}
 L_s&=\frac1{\Delta_s\mathcal T_s}
 \begin{pmatrix}v_s^2+\Delta_s&-v_sz_s\\-v_sz_s^*&|z_s|^2\end{pmatrix},
 \label{eq:sld_explicit}\\
 J_s^Q&=\Tr(PL_s)=\frac{v_s^2+\Delta_s}{\Delta_s\mathcal T_s}.
 \label{eq:qfi}
\end{align}
For any SLD eigenvector $\ket{e_m}$, the measured background is $b_m=\langle e_m|B_s|e_m\rangle$ and the rate slope is $p_m=|\langle e_m|p\rangle|^2$. If its SLD eigenvalue is $\ell_m$, Eq.~\eqref{eq:sld} gives $p_m=\ell_m b_m$. Hence $\sum_m p_m^2/b_m=\Tr(B_sL_s^2)=\Tr(PL_s)$, which verifies attainability in the two counted outputs without additional measurement resources.

The planet projection has information $1/a_s$. Their difference is
\begin{equation}
 J_s^Q-\frac1{a_s}
 =\frac{v_s|z_s|^2}{a_s\Delta_s\mathcal T_s}>0
 \quad\text{when }z_s\ne0.
 \label{eq:strict}
\end{equation}
For Eq.~\eqref{eq:aligned}, $|z_s|^2=A^2u(1-u)$. Thus every $A>0$ gives strict ensemble separation before differential loss, since $u=0,1$ have Haar measure zero. The same statement holds with an equal detector floor and equal branch transmission by replacing $B_s$ with $B_s+\ddet I$ for the two-output projective measurements.

Equation~\eqref{eq:strict} identifies the resource: direction-dependent \emph{spatial} coherence between the planet basis modes. No optical phase reference to the star is required. A leakage-nulling basis minimizes a port's stellar rate but need not retain the most informative planet score. The SLD instead balances slope against total noise in both denominators. It does not reduce incident leakage, duplicate photons, or obtain a gain by discarding a bright port. A WFS record used only afterward cannot change the coherence that a fixed detection basis resolved.

\subsection{Isotropic ensemble average and implementation costs}\label{sec:noisy_optimum}
For Eq.~\eqref{eq:reduced}, define dimensionless invariants
\begin{equation}
 v=1+r(1-u),\qquad \Delta=1+r+\gamma v,\qquad
 \mathcal T=2+r+\gamma.
 \label{eq:dimensionless_invariants}
\end{equation}
Equations~\eqref{eq:fixed_general} and~\eqref{eq:qfi} give, without detector events,
\begin{align}
 J_{\rm f}^{\star}&=\frac{\ln[1+r/(1+\gamma)]}{\beta r},
 \label{eq:fixed_isotropic}\\
 J_{\rm c}&=\frac{\mathcal C(r,\gamma)}{\beta},\qquad
 \mathcal C(r,\gamma)=\frac{1+\mathcal I(r,\gamma)}{2+r+\gamma},\nonumber\\
 \mathcal I(r,\gamma)&=\frac1r\int_1^{1+r}
            \frac{v^2}{1+r+\gamma v}\,dv.
 \label{eq:causal_average}
\end{align}
For $h=1+r$ and $\gamma>0$, polynomial division yields
\begin{equation}
 \mathcal I=\frac{r+2}{2\gamma}-\frac{h}{\gamma^2}
       +\frac{h^2}{r\gamma^3}\ln\!\left[\frac{h(1+\gamma)}{h+\gamma}\right].
 \label{eq:closed_integral}
\end{equation}
Cancellation makes this expression unsuitable at small $\gamma$; the bounded orientation integral is numerically stable. The exact continuous limits are
\begin{align}
 \mathcal C(r,0)&=\frac{r^2+6r+6}{3(r+1)(r+2)},\nonumber\\
 \mathcal C(0,\gamma)&=\frac1{1+\gamma}.
 \label{eq:limits}
\end{align}
At vanishing leakage the intrinsic separation vanishes. At fixed $\gamma$ and very large $r$, a direction-conditioned output can retain information in a weakly contaminated mode; the fixed planet port encounters the full range of leakage overlaps. Increasing planet continuum, however, adds fluctuations directly in the same mode as the derivative and consumes this advantage.

Let $0<\tau\le1$ be conditioned-to-fixed optical transmission after common telescope and spectral optics, and let $\kappa=\ddet/\beta$. Both $\tau$ and $\kappa$ are dimensionless. Optical loss attenuates both signal slope and optical background, whereas detector events are added afterward. The matched-detector fixed optimum and the two-output conditioned SLD give
\begin{align}
 J_{\rm f}^{\star}&=\frac1{\beta r}
        \ln\!\left(1+\frac r{1+\gamma+\kappa}\right),\nonumber\\
 r'&=\frac{\tau r}{\tau+\kappa},\qquad
 \gamma'=\frac{\tau\gamma}{\tau+\kappa},\nonumber\\
 J_{\rm c}&=\frac{\tau^2}{\beta(\tau+\kappa)}
        \mathcal C(r',\gamma'),\label{eq:detector_rates}\\
 \Gwall&=\drel\frac{\tau^2r}{\tau+\kappa}
 \frac{\mathcal C(r',\gamma')}{\ln[1+r/(1+\gamma+\kappa)]}.
 \label{eq:exact_gain}
\end{align}
The two-output SLD is also globally optimal at each direction among all finite-outcome memoryless single-photon POVMs under the stated loss and detector model. To see why the per-output detector charge does not invalidate that bound in two dimensions, note that Eq.~\eqref{eq:fi} is convex in the effects. At fixed finite outcome number its maximum can therefore be attained at an extreme POVM. Every nontrivial extreme two-dimensional POVM has rank-one effects: a full-rank effect and any other nonzero effect admit a small opposite transfer of the latter between the two, producing a nontrivial convex decomposition. A rank-one effect obeys $\Tr\Pi_m\le1$, so its physical detector charge is at least $\ddet\Tr\Pi_m$. Its information is consequently no greater than that for the rate operator $B_s+\ddet I$, which is bounded by its SLD information. The remaining one-output measurement has information $(\Tr B_s+\ddet)^{-1}$ and is no better than the planet projection. Convexity extends the bound to every finite-outcome POVM, and the two-output SLD attains it. The same reasoning applies after relative optical attenuation. Thus $J_{\rm c}$ in Eq.~\eqref{eq:detector_rates} is the optimum over all direction-conditioned measurements in this class, not only over projective analyzers. This argument is specific to two dimensions and the equal, known, independent detector-event model; unequal rates, transients, and mode-dependent transmission require the implemented likelihood.

In the same-hardware experiment of Fig.~\ref{fig:architecture}, equal loss at the two settings gives $\tau=1$. Values below one represent measured setting-dependent loss or a comparison with a more transmissive fixed implementation. The scalar model requires a direction-independent relative attenuation; it is not a substitute for a transfer matrix when loss varies by mode or command.

For an assigned target $G_{\rm tar}$, the required relative duty is
\begin{equation}
 d_{\rm rel}^{\rm req}
 =\frac{G_{\rm tar}}{J_{\rm c}/J_{\rm f}^{\star}},\qquad
 \drel\ge d_{\rm rel}^{\rm req}.
 \label{eq:required_duty}
\end{equation}
Only if $d_{\rm c}\le d_{\rm f}$ may one additionally require $d_{\rm rel}^{\rm req}\le1$. For $G_{\rm tar}=1.25$, a 20\% local exposure saving, $\gamma=0.5$, $\tau=0.8$, and $\kappa=0.1$, the leakage threshold is $r=11.359$ at relative duty one and $r=14.582$ at $d_{\rm c}=0.9$, $d_{\rm f}=1$. The stress point $r=10$ requires $d_{\rm rel}^{\rm req}=1.0499$; it misses this target when the conditioned branch has no duty advantage, but is not excluded at every possible ratio of absolute duties.

\begin{table}[t]
\caption{Successive implementation charges in assigned two-mode examples. $G$ is the appropriate information ratio, including the listed duty. The saving is $100(1-G^{-1})$. The final row is a different, more favorable operating point, not an additional step in the first four rows.}
\label{tab:penalties}
\centering\small
\begin{tabular}{cccccrr}
\toprule
$r$&$\gamma$&$\tau$&$\kappa$&$d_{\rm c}/d_{\rm f}$&$G$&Saving (\%)\\
\midrule
10&0&0.80&0&1&1.3985&28.5\\
10&0.5&0.80&0&1&1.2399&19.3\\
10&0.5&0.80&0.1&1&1.1906&16.0\\
10&0.5&0.80&0.1&0.90&1.0715&6.7\\
15&0.5&0.90&0.1&0.90&1.4352&30.3\\
\bottomrule
\end{tabular}
\end{table}

Table~\ref{tab:penalties} separates continuum, detector, and duty penalties. Figure~\ref{fig:implementation}(a) shows how transmission changes the required duty. Fixed recalibration is not free: if a refresh occupies $t_{\rm ref}$ of each $T_{\rm cal}$ without overlapping another interruption, then $d_{\rm f}=1-t_{\rm ref}/T_{\rm cal}$. At $d_{\rm c}=0.90$, illustrative fixed losses of 3\% and 5\% give relative duties $0.9278$ and $0.9474$, lowering the 20\%-saving root to $13.608$ and $12.962$. A change in calibration cadence also changes the fixed receiver's information, not just its duty; that second effect is treated in Sec.~\ref{sec:temporal}.

\begin{figure*}[t]
\centering
\begin{minipage}{0.485\textwidth}\centering\includegraphics[width=\linewidth]{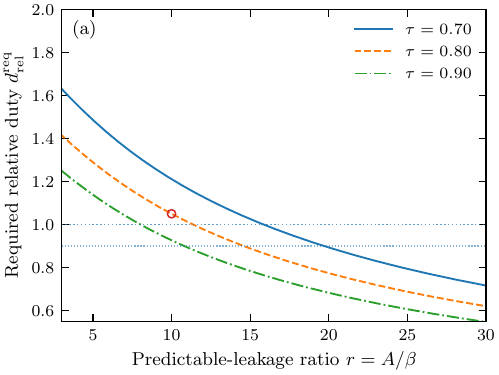}\end{minipage}\hfill
\begin{minipage}{0.485\textwidth}\centering\includegraphics[width=\linewidth]{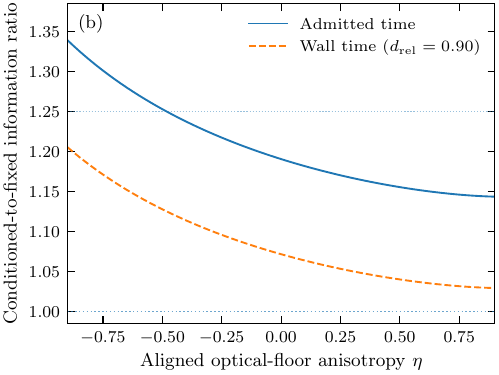}\end{minipage}
\caption{Implementation and background geometry. (a) Required relative duty for $G_{\rm tar}=1.25$ at $\gamma=0.5$, $\kappa=0.1$, and three relative transmissions. The horizontal guide at one is a restriction only when $d_{\rm c}\le d_{\rm f}$. The open marker shows the $r=10$, $\tau=0.8$ stress point. (b) Exact aligned-floor comparison at $(r,\gamma,\tau,\kappa)=(10,0.5,0.8,0.1)$ for $b_{\parallel}=\beta(1+\eta)+C$ and $b_{\perp}=\beta(1-\eta)$. The fixed comparator is given by Theorem~\ref{thm:fixed}; the two curves distinguish admitted and wall time. These are scalar-model sensitivities, not an HWO field distribution.}
\label{fig:implementation}
\end{figure*}

The aligned-floor extension also clarifies a useful stress test. Set
\begin{equation}
 b_{\parallel}=\beta(1+\eta)+C,\qquad
 b_{\perp}=\beta(1-\eta),\qquad |\eta|<1.
 \label{eq:anisotropy}
\end{equation}
At the stress point of Table~\ref{tab:penalties}, the wall ratio changes from $1.2056$ at $\eta=-0.9$ to $1.0292$ at $\eta=0.9$, with $1.0715$ at zero. A brighter planet-mode floor favors the fixed comparison in this assigned family, as shown in Fig.~\ref{fig:implementation}(b). Theorem~\ref{thm:fixed} makes this an exact comparison for aligned anisotropy. It does not extend to a floor whose principal axes are misaligned with $P$, and its percentage change cannot be applied as a universal correction to a different operating point.

\section{Contrast bounds and their photon-budget interpretation}\label{sec:budget}
\subsection{A pointwise optical bound and its Haar average}
A favorable measurement ratio at independently assigned $r$ and $\gamma$ need not describe a physical planet. Both planet continuum and predictable leakage refer to the same stellar channel rate $F_\star$. Define
\begin{align}
 A&=F_\star\Cpred,\qquad C=F_\star\Cplanet,\nonumber\\
 x&=\frac AC=\frac{\Cpred}{\Cplanet}=\frac r\gamma.
 \label{eq:contrast_ray}
\end{align}
The contrasts $\Cpred$ and $\Cplanet$, and their ratio $x$, are dimensionless: they are rates in the \emph{selected modal pair}, normalized to the same stellar channel rate $F_\star$ in $\mathrm{s}^{-1}$. They are not automatically a spatially averaged dark-hole contrast. Changing the isotropic optical floor at fixed $A$ and $C$ moves the point along a fixed $x$, while the detector rate must remain a separate physical quantity.

\begin{proposition}[Haar contrast bounds]\label{prop:ceiling}
For Eq.~\eqref{eq:reduced}, Haar $s$, $A,C>0$, $\beta,\ddet\ge0$, $0<\tau\le1$, and direction-independent positive duties, let $\Gwall$ be the optimal conditioned-to-fixed information ratio, attained by the two-output SLD of Sec.~\ref{sec:noisy_optimum}. Then
\begin{align}
 \drel\tau^2\ \le\ \Gwall
 &\le\drel\tau\,\Phi\!\left(\frac A{C+\beta+\ddet}\right)\nonumber\\
 &\le\drel\tau\,\Phi(x),\label{eq:ceiling}\\
 \Phi(x)&=\frac{x(x+2)}{2(x+1)\ln(1+x)}
        =\frac{\sinh[\ln(1+x)]}{\ln(1+x)}.
 \label{eq:phi}
\end{align}
Every direction-conditioned finite-outcome memoryless single-photon POVM obeys the upper bounds in Eq.~\eqref{eq:ceiling}. The lower bound is guaranteed for the optimal conditioned receiver, not for an arbitrary suboptimal measurement. The $A=0$ limits are continuous. These statements are independent of the isotropic-background partition but retain the Haar ensemble, scalar loss, and equal independent per-output detector-event assumptions.
\end{proposition}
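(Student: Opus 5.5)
The plan is to compare the per-direction optimum of Sec.~\ref{sec:noisy_optimum} with Theorem~\ref{thm:fixed} at $b_{\parallel}=\beta+C$, $b_{\perp}=\beta$, so that
\[
J_{\rm f}^\star=A^{-1}\ln(1+y),\qquad y=\frac{A}{C+\beta+\ddet}.
\]
By Eq.~\eqref{eq:wallgain}, it suffices to bound $J_{\rm c}/J_{\rm f}^\star$ and multiply by $\drel$. The duties are direction independent, so they factor out of all averages.

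\emph{Lower bound.} In each direction the conditioned receiver may choose the planet projection $\{P,P_\perp\}$. With scalar loss, its information is $\tau^2/[\tau(b_{\parallel}+Au)+\ddet]$. Since $\tau\le1$, this is at least $\tau^2/(b_{\parallel}+Au+\ddet)$. Averaging over $u$ reproduces $\tau^2J_{\rm f}^\star$. Because $J_{\rm c}$ is the supremum over conditioned POVMs, we get $J_{\rm c}\ge\tau^2J_{\rm f}^\star$. This is also why the lower bound is claimed only for the optimal receiver.

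\emph{Reduction of the upper bound.} By the global-optimality argument of Sec.~\ref{sec:noisy_optimum}, every finite-outcome conditioned POVM is bounded by the SLD information of the rate operator $\tau B_s+\ddet I$ with derivative $\tau P$. I write this operator as $\tau(B_s+d'I)$ with $d'=\ddet/\tau\ge\ddet$. Since the QFI for an additive rate scales as the inverse of the operator, the conditioned information is at most $\tau\,J^Q[B_s+d'I]$. The detector rate $d'$ now acts as an addition to the isotropic floor, $\tilde\beta=\beta+d'$. Put $b=C+\tilde\beta$ and $\tilde y=A/b$. The key algebraic observation is
\[
\Phi(\tilde y)\,\frac{\ln(1+\tilde y)}{A}=\frac12\left(\frac1b+\frac1{b+A}\right).
\]
So it is enough to show $\E_u J^Q_u\le\frac12\bigl(b^{-1}+(b+A)^{-1}\bigr)$.

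For the Haar law, $u$ is uniform on $[0,1]$, so it suffices to prove the pointwise inequality
\[
J^Q_u\le\frac{1-u}{b}+\frac{u}{b+A},\qquad u\in[0,1].
\]
This is tight at both endpoints. At $u=0$ or $u=1$ we have $z_s=0$, and Eq.~\eqref{eq:qfi} gives $1/b$ and $1/(b+A)$ respectively.

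\emph{Main obstacle.} The pointwise inequality is the step I expect to be hardest. I would first try to verify it directly from Eq.~\eqref{eq:qfi} with the invariants of Eq.~\eqref{eq:dimensionless_invariants}, where $v=\tilde\beta+A(1-u)$ and $|z_s|^2=A^2u(1-u)$. Clearing the positive denominators $b(b+A)\Delta_s\mathcal T_s$ should leave a polynomial in $u$ that vanishes at $u=0,1$. I would then try to factor it as $u(1-u)$ times a manifestly nonnegative expression in $\tilde\beta,A,C$. If the factorization is messy, the fallback is structural: show $u\mapsto J^Q_u$ is convex by using the convexity of QFI in the rate operator together with a suitable mixture representation of $AS$. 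Convexity plus the endpoint values then gives the chord bound.

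Once the pointwise bound holds, I undo the scaling. We have $\tilde y=A/(C+\beta+\ddet/\tau)\le A/(C+\beta+\ddet)$, and $\Phi(y)=\sinh(\ln(1+y))/\ln(1+y)$ is increasing because $\sinh w/w$ increases for $w>0$. Hence the ratio is at most $\tau\Phi(A/(C+\beta+\ddet))$. I would check that passing from $J_{\rm f}^\star(\ddet)$ to the $d'$-shifted comparator moves the bound in the stated direction; if it does not, I would keep $d'$ inside the argument and use monotonicity of both factors. The same monotonicity, with $A/(C+\beta+\ddet)\le A/C=x$, gives the final inequality $\le\drel\tau\Phi(x)$. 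The $A=0$ statement follows from $\Phi(0^+)=1$ together with the continuous limits in Theorem~\ref{thm:fixed} and Eq.~\eqref{eq:limits}.
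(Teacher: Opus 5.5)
Your proof is correct. It follows the paper's skeleton: planet projection for the lower bound, reduction through Sec.~\ref{sec:noisy_optimum} to $\tau$ times the quantum information of $B_s+(\ddet/\tau)I$, a pointwise chord bound in $u$, averaging, and monotonicity of $\Phi$. The difference is how you obtain the pointwise inequality.

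\textbf{Your direct route closes.} Put $\tilde\beta=\beta+\ddet/\tau$ and $b=C+\tilde\beta$. Then $\Delta_s=\tilde\beta(b+A)+AC(1-u)$, and clearing denominators gives
\[
\frac{1-u}{b}+\frac{u}{b+A}-J^Q_u
=\frac{A^2\tilde\beta(\tilde\beta+A)\,u(1-u)}{b(b+A)\Delta_s\mathcal T_s}\ge0 .
\]
This is exactly the $u(1-u)$ factorization you anticipated, so the convexity fallback is not needed.

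\textbf{The paper's route.} The paper recognizes the chord as a quantum information in its own right, $\Q(AS+DP;P)=[D+A(1-u)]/[D(A+D)]$. It then uses that $\Q(B;P)=\sup_L\{2\Tr(PL)-\Tr(BL^2)\}$ cannot increase when a positive operator is added to $B$. The single comparison $AS+CP+(\beta+\ddet/\tau)I\succeq AS+DP$, with $D=C+\beta+\ddet$, gives the pointwise bound without algebra. The same comparison already replaces $\ddet/\tau$ by $\ddet$, so your separate ``undo the scaling'' step disappears.

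\textbf{What each buys.} The paper's argument is shorter and makes the mechanism transparent: the orthogonal-mode floor only adds noise, and without it the information is linear in $u$. Yours gives an explicit slack. It shows the pointwise bound is tight only when $\tilde\beta=0$ or $u\in\{0,1\}$, and your intermediate bound (with $b\ge D$) is slightly sharper before you relax it.

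\textbf{Your closing hedge.} It resolves cleanly. Since $\tfrac12[b^{-1}+(b+A)^{-1}]$ decreases in $b$, the inequality $b\ge D$ gives at most $\Phi(A/D)\,J_{\rm f}^{\star}$ directly. Equivalently, $\Phi(\tilde y)\ln(1+\tilde y)=\sinh[\ln(1+\tilde y)]$ is increasing, which is your ``both factors'' fallback.
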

\begin{proof}
For derivative $P$, introduce the variational information functional
\begin{equation}
 \Q(B;P)=\sup_{L=L^\dagger}
       \{2\Tr(PL)-\Tr(BL^2)\}.
 \label{eq:q_variation}
\end{equation}
The functional $\Q$ has units seconds. Its stationary equation is Eq.~\eqref{eq:sld}, and its optimum is $\Tr(PL)$. Since $L^2\succeq0$, adding a positive rate operator cannot increase $\Q$. Put $D=C+\beta+\ddet$, the direction-independent planet-port rate in $\mathrm{s}^{-1}$. Factoring the conditioned transmission gives
\begin{equation}
 AS+CP+(\beta+\ddet/\tau)I\succeq AS+DP.
 \label{eq:operator_comparison}
\end{equation}
The difference has nonnegative eigenvalues $\ddet(1/\tau-1)$ in the planet mode and $\beta+\ddet/\tau$ in its complement. At overlap $u$, Eq.~\eqref{eq:qfi} gives
\begin{equation}
 \Q(AS+DP;P)=\frac{D+A(1-u)}{D(A+D)}.
 \label{eq:zero_floor_pointwise}
\end{equation}
Singular endpoints are defined by continuity on the support of $P$. Averaging uniform $u$ therefore gives $(A+2D)/[2D(A+D)]$. By Sec.~\ref{sec:noisy_optimum}, the largest possible conditioned information is
$\tau\E_s\Q[AS+CP+(\beta+\ddet/\tau)I;P]$,
whereas Eq.~\eqref{eq:fixed_general} gives the fixed information $\ln(1+A/D)/A$. Division proves the first upper bound for the optimum and hence for every conditioned receiver in the class. The second follows because $\Phi$ is increasing: with $t=\ln(1+x)>0$, $t\cosh t-\sinh t>0$.

For the lower bound, the conditioned SLD is at least as informative as the conditioned planet projection. If $b_P=\beta+C+Au$ is the preloss planet-port optical rate, the conditioned and fixed planet-port information obey
\begin{equation}
 \frac{\tau^2}{\tau b_P+\ddet}
 \ge\frac{\tau^2}{b_P+\ddet}.
 \label{eq:lower_pointwise}
\end{equation}
Averaging and using the exact fixed optimum proves the lower bound.
\end{proof}

\subsection{An exclusion independent of direction statistics}\label{sec:direction_free}
The pointwise operator comparison also limits arbitrary direction ensembles. A non-Haar fixed optimum need not be known: it is enough that the planet projection remains an available fixed measurement.

\begin{proposition}[Direction-distribution-independent ceiling]\label{prop:direction_free}
Retain the two-mode rate operator of Eq.~\eqref{eq:reduced}, $C>0$, $A,\beta,\ddet\ge0$, scalar relative transmission $0<\tau\le1$, and direction-independent positive duties. Let the fixed comparator optimize over a class containing $\{P,P_\perp\}$, optionally with recalibration. For any probability law of the recorded leakage direction,
\begin{align}
 \Gwall&\le\drel\tau\Psi(\xi)\le\drel\tau\Psi(x),
 \qquad \xi=\frac{A}{D}\le x,\label{eq:direction_ceiling}\\
 \Psi(\xi)&=\frac{(1+\xi/2)^2}{1+\xi}
           =1+\frac{\xi^2}{4(1+\xi)}.\label{eq:psi}
\end{align}
This upper bound applies to every direction-conditioned finite-outcome single-photon POVM in the declared detector model. It does not require independent direction draws, Haar isotropy, or a particular calibration cadence.
\end{proposition}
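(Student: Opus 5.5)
The plan is to reuse the two pointwise ingredients already established in the proof of Proposition~\ref{prop:ceiling}, but to replace the Haar average with a worst case over all laws of the overlap $u=|\langle p|s\rangle|^2$. The argument has three parts: an upper bound on the conditioned information, a lower bound on the fixed information, and a one-parameter maximization.

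\emph{Conditioned side.} For every recorded direction, the operator comparison \eqref{eq:operator_comparison} and the monotonicity of $\Q$ under addition of positive rate operators apply. Combined with the attainability and optimality argument of Sec.~\ref{sec:noisy_optimum}, they give
\begin{equation*}
J_{\rm c}(s)\le\tau\,\Q(AS+DP;P)=\tau\,\frac{D+A(1-u)}{D(A+D)} .
\end{equation*}
This holds for every direction-conditioned finite-outcome memoryless single-photon POVM. No distributional assumption on $s$ is used here. The right side is affine in $u$, so for any law of $u$ with mean $m\in[0,1]$,
\begin{equation*}
\E_sJ_{\rm c}\le\tau\,\frac{D+A(1-m)}{D(A+D)} .
\end{equation*}
For a recorded history that is not i.i.d., $\E_s$ is read as the time-weighted average along the history. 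Because the duties are direction independent, linearity still applies.

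\emph{Fixed side.} The comparator class contains $\{P,P_\perp\}$, so $J_{\rm f}$ is at least the planet-projection information. The planet port has optical rate $C+\beta+Au$ plus $\ddet$, and the orthogonal port carries no derivative, so
\begin{equation*}
J_{\rm f}\ge\E_s\!\left[\frac{1}{D+Au}\right].
\end{equation*}
Allowing recalibration can only enlarge the fixed class, and reusing the planet basis at every calibration epoch is admissible, so this lower bound survives any cadence. The map $u\mapsto1/(D+Au)$ is convex, so Jensen's inequality gives $J_{\rm f}\ge 1/(D+Am)$. This Jensen step is the only place where the distribution of $u$ is handled, and it requires nothing about it.

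\emph{Maximization and conclusion.} Dividing the two bounds and including $\drel$ gives
\begin{equation*}
\Gwall\le\drel\,\tau\,\frac{(D+A-Am)(D+Am)}{D(D+A)} .
\end{equation*}
The two factors in the numerator have the fixed sum $2D+A$, so their product is maximized when they are equal, at $m=1/2$. This yields
\begin{equation*}
\Gwall\le\drel\,\tau\,\frac{(D+A/2)^2}{D(D+A)}=\drel\,\tau\,\Psi(\xi).
\end{equation*}
The second inequality in \eqref{eq:direction_ceiling} follows in two steps. First, $D=C+\beta+\ddet\ge C$ gives $\xi\le x$. Second, $\Psi'(\xi)=\xi(\xi+2)/[4(1+\xi)^2]\ge0$, so $\Psi$ is nondecreasing. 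The endpoints $u\in\{0,1\}$ and the case $A=0$, where $\Psi=1$, are covered by the continuity convention already used for \eqref{eq:zero_floor_pointwise}.

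The step I expect to be most delicate is the first one. It must be justified that the pointwise SLD bound really dominates every conditioned finite-outcome POVM, including the detector charges and the factoring of the scalar loss $\tau$ that produces the $\ddet/\tau$ term. I would simply invoke the convexity and extreme-POVM argument of Sec.~\ref{sec:noisy_optimum} direction by direction, as in Proposition~\ref{prop:ceiling}. I would also state explicitly that this argument never used Haar statistics.
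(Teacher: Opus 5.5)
Your proof is correct, and it uses the same two ingredients as the paper. The first is the per-direction bound $J_{\rm c}(s)\le\tau\,\Q(AS+DP;P)$, from Eq.~\eqref{eq:operator_comparison} together with the optimality argument of Sec.~\ref{sec:noisy_optimum}. The second is the feasible planet projection $J_P(s)=1/(D+Au)$, which lower-bounds any fixed comparator, recalibrated or not.

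You differ from the paper in how the direction law is eliminated. The paper bounds the ratio pointwise, $J_{\rm c}(s)/J_P(s)\le\tau[1+\xi^2u(1-u)/(1+\xi)]\le\tau\Psi(\xi)$. It then multiplies by the common nonnegative direction or time weights and sums, so it needs no convexity step. You instead average numerator and denominator separately: you use affinity of the conditioned bound in $u$, apply Jensen to $1/(D+Au)$, and optimize the mean overlap $m$ by AM--GM.

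Your route gives a bound that depends only on the first moment, $\Gwall\le\drel\tau[1+\xi^2m(1-m)/(1+\xi)]$. This is useful when the law is known to be biased toward or away from the planet mode.

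The paper's route is tighter and more modular. Its intermediate quantity $\E_s[\tau\Q(AS+DP;P)]/\E_sJ_P$ equals exactly $\tau[1+\xi^2\,\E_w u(1-u)/(1+\xi)]$, with weights $w\propto J_P(s)$. Your Jensen step is a relaxation of this quantity, strict unless $u$ is deterministic and $A>0$. For example, $u\in\{0,1\}$ with equal probability gives $\tau$ in the paper's intermediate but $\tau\Psi(\xi)$ in yours.

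The pointwise form also extends verbatim to known rates $A$ and $D$ that vary between intervals, as remarked after the proposition. Your mean-overlap reduction would have to be applied segment by segment and recombined with a mediant inequality, which is essentially the paper's argument.
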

\begin{proof}
The unattenuated planet projection has $J_P(s)=1/(D+Au)$. Equations~\eqref{eq:operator_comparison} and~\eqref{eq:zero_floor_pointwise} bound any conditioned receiver by
\begin{align}
 \frac{J_{\rm c}(s)}{J_P(s)}
 &\le\tau\frac{[D+A(1-u)](D+Au)}{D(A+D)}\nonumber\\
 &=\tau\left[1+\frac{\xi^2u(1-u)}{1+\xi}\right]
 \le\tau\Psi(\xi).\label{eq:direction_pointwise}
\end{align}
The last inequality uses $u(1-u)\le1/4$. Multiply by the common nonnegative direction or time weights and sum. The fixed optimum satisfies $J_{\rm f}^{\star}\ge\E_s J_P(s)$, so
\begin{equation}
 \frac{\drel\E_sJ_{\rm c}(s)}{J_{\rm f}^{\star}}
 \le\frac{\drel\E_sJ_{\rm c}(s)}{\E_sJ_P(s)}
 \le\drel\tau\Psi(\xi).
 \label{eq:weighted_direction_bound}
\end{equation}
The first inequality remains valid within each recalibration interval. Finally, $\Psi'(\xi)=\xi(\xi+2)/[4(1+\xi)^2]\ge0$ and $D\ge C$ prove the contrast-only bound.
\end{proof}

Temporal correlations here refer to the \emph{recorded, parameter-ancillary directions}: conditional Poisson information still adds along a correlated history. The argument does not convert unresolved persistent rates into a Poisson likelihood. It also uses the same direction weights for both branches after their direction-independent duty factors are separated. State-selective admission requires the actual weighted comparison; replacing those weights by mean duties need not preserve Eq.~\eqref{eq:direction_ceiling}. For known rates varying between intervals, the same pointwise argument applies using an upper bound on the intervalwise factor $\tau\Psi[A/(C+\beta+\ddet)]$.

The bound has a direct optical interpretation. Its excess over the scalar transmission penalty is limited by $\xi^2u(1-u)$: both leakage components must be present to provide an off-diagonal coherence that a basis change can exploit. The largest relaxed advantage occurs at equal modal powers, $u=1/2$, not when the leakage is parallel or orthogonal to the planet. At weak predictable leakage, $\Psi(\xi)=1+\xi^2/4+O(\xi^3)$, so the available measurement advantage is only second order while attenuation and duty losses remain finite.

At equal contrasts, the ceiling is sharp in the zero-floor limit. Set $A=C$, $\beta=\ddet=0$, and $\ket s=(\ket p+e^{i\varphi}\ket{p_\perp})/\sqrt2$ with uniform relative phase $\varphi$. For a rank-one effect $w\ket v\bra v$ with $a=|\langle p|v\rangle|^2$, direct phase integration gives
\begin{equation}
 \E_\varphi J_m=\frac{w}{C}\frac{a^2}{\sqrt{2a^2+1/4}}
 \le\frac{2wa}{3C}.
 \label{eq:equatorial_fixed}
\end{equation}
Rank-one refinement is valid when $\ddet=0$; completeness gives $\sum_m w_ma_m=1$. Thus $J_{\rm f}^{\star}=2/(3C)$, while $J_{\rm c}=3\tau/(4C)$, giving $\Gwall=9\drel\tau/8$. This is a phase-diverse ensemble without phase-concentration information in the calibration record. A single known static direction instead permits a fixed SLD and does not saturate this bound.

For $\tau=0.80$ and $\drel=0.90$, equal predictable-leakage and planet contrasts imply, for any direction law,
\begin{equation}
 \Gwall\le\frac98\tau\drel=0.8100<1.
 \label{eq:design_distribution_free}
\end{equation}
The Haar optimum is more tightly bracketed:
\begin{equation}
 0.5760\le\Gwall\le\frac{3\tau\drel}{4\ln2}=0.7791.
 \label{eq:design_exclusion}
\end{equation}
Even the optimal conditioned receiver needs at least $23.5\%$ more scheduled time for any direction law, and at least $28.4\%$ more for the Haar ensemble, at the same local information target. A gain at equal contrasts requires $\tau\drel>8/9$; equality merely allows break-even in the ideal limiting ensemble. No change in the nonnegative isotropic optical floor or matched detector rate restores a gain at the quoted losses. The lower bound in Eq.~\eqref{eq:design_exclusion} concerns the Haar optimum, not every implemented measurement.

The direction-free ceiling gives an explicit necessary contrast boundary. For target $G_{\rm tar}$, define the dimensionless $z_{\rm tar}=G_{\rm tar}/(\tau\drel)>1$. Then
\begin{equation}
 x_{\rm nec}=2(z_{\rm tar}-1)+2\sqrt{z_{\rm tar}(z_{\rm tar}-1)}.
 \label{eq:direction_root}
\end{equation}
A gain strictly exceeding $G_{\rm tar}$ requires $x>x_{\rm nec}$; equality in the target permits $x=x_{\rm nec}$. For the quoted losses and targets $G_{\rm tar}=(1,1.25)$, respectively, the necessary boundaries are
\begin{align}
 x_{\rm nec}&\simeq(2.24764,\ 3.73317),\label{eq:direction_roots}\\
 x_{{\rm nec},H}&\simeq(3.2612,\ 5.8059),\label{eq:contrast_roots}
\end{align}
where the second line uses the stronger Haar ceiling of Proposition~\ref{prop:ceiling}. These are exclusions, not sufficient observing conditions. Finite backgrounds tighten them through $\xi<x$. The quantities in Eq.~\eqref{eq:contrast_ray} must be established from modal rates before substituting a spatially averaged instrument contrast.

Two limits clarify the remaining Haar normalization dependence. At fixed $\kappa$ and $x$, $r\to0$ gives
\begin{equation}
 \Gwall\longrightarrow\drel\tau^2\frac{1+\kappa}{\tau+\kappa}.
 \label{eq:fixed_kappa_limit}
\end{equation}
For $\kappa=0.1$, this equals $0.7040$. At the opposite end, taking both $\beta\to0$ and $\ddet\to0$ approaches $\drel\tau\Phi(x)$. Holding a physical $\ddet>0$ fixed instead gives $\kappa\to\infty$, and that ceiling need not be attained. The bounds, rather than a choice of limiting path, establish the exclusion.

\subsection{An explicit, reproducible modal photon budget}\label{sec:photometry}
For a concrete scale, consider a solar analog at distance $d_\star=10\,\mathrm{pc}$, a clear aperture of diameter $D_{\rm tel}=6\,\mathrm m$, and a top-hat channel centered at $\lambda_0=0.95\,\mu\mathrm m$ with dimensionless resolving power $\mathcal R=\lambda_0/\Delta\lambda=140$. Here $\mathrm{pc}$ is the parsec ($1\,\mathrm{pc}\simeq3.08567758\times10^{16}\,\mathrm m$), $\mathrm m$ is the metre, and $1\,\mu\mathrm m=10^{-6}\,\mathrm m$. Adopt stellar temperature $T_\star=5778\,\mathrm K$, where $\mathrm K$ denotes kelvin, radius $R_\star=6.957\times10^8\,\mathrm m$, and dimensionless total throughput $\eta_{\rm opt}=0.10$ including any selection of the analyzed polarization. With $A_{\rm tel}=\pi D_{\rm tel}^2/4$, the detected stellar channel rate is
\begin{align}
 F_\star={}&\eta_{\rm opt}A_{\rm tel}\left(\frac{R_\star}{d_\star}\right)^2
 \int_{\lambda_-}^{\lambda_+}\frac{\pi B_\lambda(T_\star)\lambda}{hc_0}\,d\lambda,
 \label{eq:star_rate}\\
 \lambda_\pm={}&\lambda_0\pm\frac{\lambda_0}{2\mathcal R},\qquad
 B_\lambda(T)=\frac{2hc_0^2}{\lambda^5[\exp(hc_0/\lambda k_BT)-1]}.
 \nonumber
\end{align}
Here $c_0$ is the speed of light in $\mathrm{m\,s}^{-1}$, $h$ is Planck's constant in joule seconds ($\mathrm{J\,s}$), and $k_B$ is Boltzmann's constant in $\mathrm{J\,K}^{-1}$. The symbol $\mathrm J$ here denotes the joule, not Fisher information. With wavelengths in metres, the Planck spectral radiance $B_\lambda$ has units $\mathrm{W\,m}^{-3}\,\mathrm{sr}^{-1}$; watt ($\mathrm W$) means joule per second and steradian ($\mathrm{sr}$) is the solid-angle unit. Dividing radiant power by photon energy $hc_0/\lambda$ gives a photon rate. Equation~\eqref{eq:star_rate} gives $F_\star=1.76878\times10^7\,\mathrm{s}^{-1}$ and, for $\Cplanet=10^{-10}$, $C=1.76878\times10^{-3}\,\mathrm{s}^{-1}$. These choices define an illustration, not an adopted HWO throughput model.

A diffuse background requires a \emph{mode-coupling} prescription. For a uniformly illuminated clear pupil and a normalized accepted pupil mode, Parseval's identity gives the integrated angular response $\lambda^2/A_{\rm tel}$. Thus an ideal locally uniform sky contributes through the equivalent solid angle
\begin{equation}
 \Omega_{\rm mode}=\frac{\lambda_0^2}{A_{\rm tel}}
 =1.35802\times10^{-3}\,\mathrm{arcsec}^2.
 \label{eq:etendue}
\end{equation}
The ratio $\lambda_0^2/A_{\rm tel}$ is a solid angle in steradians; the displayed value converts to square arcseconds, using one arcsecond ($\mathrm{arcsec}$) equal to $\pi/648000$ radians. This is the integral of a single mode's angular acceptance, not a freely chosen circular photometric aperture. Equal normalized modes have equal diffuse coupling under these clear-pupil, uniform-sky assumptions. A coronagraph or a nonuniform sky can violate that equality.

Adopt apparent magnitude $m_{V,\star}=4.83$ in the Johnson $V$ visual band and a gray-scattering, stellar-color extrapolation from that band to the channel. Magnitudes (mag) are dimensionless logarithmic flux measures; $\mu_V$ is the numerical surface brightness in $\mathrm{mag\,arcsec}^{-2}$. The corresponding modal rate is
\begin{equation}
 \frac{\beta_{\rm diffuse}}{F_\star}
 =\left(\frac{\Omega_{\rm mode}}{1\,\mathrm{arcsec}^2}\right)
 10^{-0.4(\mu_V-m_{V,\star})}.
 \label{eq:surface_brightness}
\end{equation}
Exozodiacal light is dust-scattered light in the target system; one exozodi denotes the adopted solar-zodiacal reference brightness, and $N_{\rm ez}$ is its dimensionless multiplier. The Earth-equivalent insolation distance is the orbital radius receiving Earth's stellar irradiance. The values $\mu_{V,\rm ez}=22$ for one exozodi at that radius and $\mu_{V,\rm local}=23$ for dust-scattered zodiacal light in the Solar System follow the explicit exposure-time-calculator conventions of Ref.~\cite{Stark2025ETC}. Their extrapolation here assumes solar colors for this solar analog; they are not claimed to be universal surface brightnesses for every viewing geometry. The blackbody and adopted $V$ magnitude are an approximate solar normalization, not a precision stellar atmosphere or a synthetic Johnson-band calibration.

Equations~\eqref{eq:etendue} and~\eqref{eq:surface_brightness} yield
\begin{align}
 \beta_{\rm ez,1}&=3.25521\times10^{-3}\,\mathrm{s}^{-1}\,\mathrm{mode}^{-1},\nonumber\\
 \beta_{\rm local}&=1.29592\times10^{-3}\,\mathrm{s}^{-1}\,\mathrm{mode}^{-1},\nonumber\\
 \beta(N_{\rm ez})&=\beta_{\rm local}+N_{\rm ez}\beta_{\rm ez,1}
                         +\beta_{\rm other}.
 \label{eq:modal_partition}
\end{align}
We assign $\beta_{\rm other}=10^{-4}\,\mathrm{s}^{-1}$ per mode as an optimistic allowance for other irreducible \emph{optical} terms. This does not demonstrate that finite-star leakage and unpredicted residuals fit that allowance. The detector rate is separately fixed at $\ddet=2\times10^{-4}\,\mathrm{s}^{-1}$ per output in every background comparison. All additional leakage $A$ is treated as predictable, an optimistic assumption relaxed in Sec.~\ref{sec:temporal}.

\begin{table*}[t]
\caption{Illustrative physical-rate budget from Eq.~\eqref{eq:modal_partition}, with $\Cplanet=10^{-10}$, $\tau=0.80$, $d_{\rm c}/d_{\rm f}=0.90$, and the same detector rate $\ddet=2\times10^{-4}\,\mathrm{s}^{-1}$ per output in all rows. Thresholds solve Eq.~\eqref{eq:exact_gain} using the stated modal coupling; they are not mission exposure predictions.}
\label{tab:budget}
\centering\small
\begin{tabular}{ccccccc}
\toprule
$N_{\rm ez}$&$\beta$ ($10^{-3}\,\mathrm{s}^{-1}$ per mode)&$\gamma$&$\kappa$&$G$ at $\Cpred=10^{-10}$&$\Cpred$ at $G=1$&$\Cpred$ at $G=1.25$\\
\midrule
1&4.6511&0.3803&0.04300&0.7190&$1.9102\times10^{-9}$&$3.4147\times10^{-9}$\\
3&11.1616&0.1585&0.01792&0.7184&$3.9278\times10^{-9}$&$7.0456\times10^{-9}$\\
10&33.9480&0.0521&0.00589&0.7192&$1.0977\times10^{-8}$&$1.9733\times10^{-8}$\\
\bottomrule
\end{tabular}
\end{table*}

Table~\ref{tab:budget} and Fig.~\ref{fig:contrast} distinguish both proved ceilings from the Haar finite-budget thresholds. One exozodi gives $\gamma=0.3803$; $\gamma=0.5$ would correspond to $0.658$ exozodi under this partition. At equal contrasts, the three gains are close to $0.719$, or about 39\% longer wall time. The one-exozodi 20\%-saving threshold is the dimensionless contrast $\Cpred=3.41\times10^{-9}$, about 34 times the assigned planet contrast, and larger diffuse backgrounds move the threshold further into leakage-dominated operation. Increasing background does not restore an advantage at the equal-contrast point; the small nonmonotonic variation of those three below-unity ratios reflects the competing optical and fixed-detector contributions, not a new beneficial regime.

\begin{figure*}[t]
\centering
\begin{minipage}{0.485\textwidth}\centering\includegraphics[width=\linewidth]{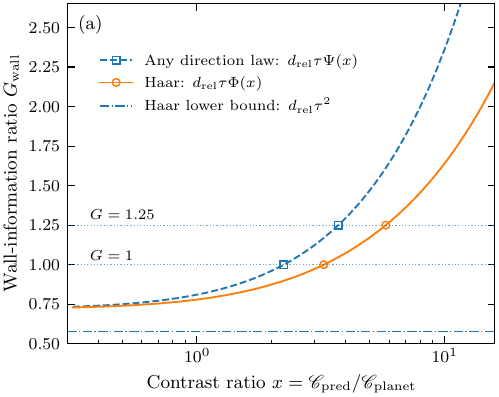}\end{minipage}\hfill
\begin{minipage}{0.485\textwidth}\centering\includegraphics[width=\linewidth]{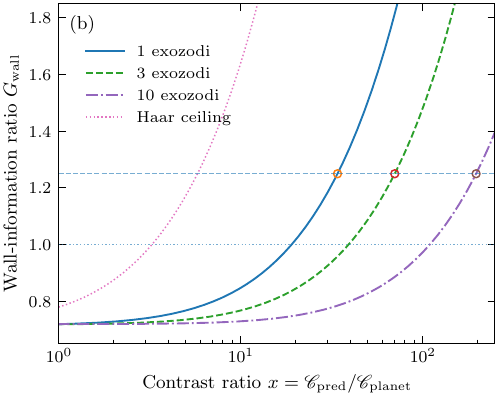}\end{minipage}
\caption{Dimensionless contrast limits and finite-background calculations for $\tau=0.80$, $\drel=0.90$. (a) The direction-distribution-independent ceiling $\drel\tau\Psi(x)$ and the stronger Haar ceiling $\drel\tau\Phi(x)$ both exclude a gain at $x=1$, with upper values $0.8100$ and $0.7791$. The lower bound $\drel\tau^2$ applies to the Haar optimum. Open squares and circles mark the respective break-even and 20\%-saving boundaries: $(2.24764,3.73317)$ without a direction law and $(3.2612,5.8059)$ for Haar directions. These boundaries are necessary, not sufficient. (b) Exact Haar gains for the three modal budgets in Table~\ref{tab:budget}, with the physical detector rate held fixed. Open markers show 20\%-saving crossings; the dotted curve is the Haar ceiling. Finite optical backgrounds impose substantially stronger restrictions.}
\label{fig:contrast}
\end{figure*}

A realistic coronagraph changes extended-source throughput, planet acceptance, and leakage coupling differently. Disk inclination and structure can also create anisotropy and astrophysical nuisance parameters. The budget is therefore a transparent normalization against which to test a future field calculation, while the direction-distribution-independent exclusion remains valid for an isotropic floor without adopting this particular normalization.

\section{Prediction accuracy and directional diversity}\label{sec:temporal}
\subsection{The residual must be predictable, not merely bright}
Let $L$ be the total post-control leakage in the selected pair before separating its predictable part. A simple isotropic prediction-error model gives
\begin{equation}
 A=(1-\epsilon)L,\quad \beta=b+\frac{\epsilon L}{2},\quad
 r=\frac{(1-\epsilon)q}{1+\epsilon q/2},\quad q=L/b,
 \label{eq:mismatch}
\end{equation}
where $b$ is the other irreducible optical rate per mode and $0\le\epsilon\le1$ is an error \emph{power fraction}, not a root-mean-square amplitude error. Both $\gamma=C/\beta$ and $\kappa=\ddet/\beta$ change with the same denominator when $C$ and $\ddet$ are fixed. The decomposition assumes that error power acts as an incoherent isotropic floor on the counting timescale. Persistent conditional-rate uncertainty is not made Poisson merely by averaging it; it is treated separately in Sec.~\ref{sec:latent}.

At fixed $\epsilon>0$, increasing raw leakage cannot make $r$ arbitrarily large:
\begin{equation}
 r\le\frac{2(1-\epsilon)}{\epsilon}.
 \label{eq:mismatch_cap}
\end{equation}
For example, the zero-continuum, zero-detector, unit-relative-duty case with $\tau=0.8$ needs $r>3.7479$ for break-even and $r\ge7.6377$ for a 20\% saving. Equation~\eqref{eq:mismatch_cap} excludes these targets for $\epsilon\ge0.3480$ and $0.2075$, respectively. These are sensitivity slices, not permissible errors for every HWO block. With $q=25$, $C/b=1.125$, $\ddet/b=0.225$, $\tau=0.8$, and $\drel=0.9$, even the Haar limit requires $\epsilon<0.14325$ for a strict gain. At $\epsilon=0.10$ this assignment gives $(r,\gamma,\kappa)=(10,0.5,0.1)$.

\subsection{A recalibrated fixed comparator}
If the residual direction is static and known, a fixed SLD reaches the same preloss quantum information as a conditioned SLD. A useful receiver therefore needs directional variation as well as prediction. To quantify this distinction, write $S=(I+\bm n\cdot\bm\sigma)/2$, where $\bm\sigma$ is the vector of Pauli matrices and the Bloch vector $\bm n$ is a dimensionless unit vector encoding relative modal amplitudes and phase. Let the direction within a fixed calibration interval have the conditional density
\begin{align}
 p_\zeta(\bm n\mid\bm n_0)&=\frac{\zeta e^{\zeta\bm n\cdot\bm n_0}}{4\pi\sinh\zeta},\nonumber\\
 \mathcal L(\zeta)\equiv\coth\zeta-\frac1\zeta
 &=e^{-T_{\rm cal}/t_{\rm coh}}.
 \label{eq:vmf}
\end{align}
The nonnegative concentration parameter $\zeta$ and its mean resultant $\mathcal L(\zeta)$ are dimensionless; $p_\zeta$ is a density per unit solid angle on the Bloch sphere. The directional coherence parameter $t_{\rm coh}$ and calibration interval $T_{\rm cal}$ are both in seconds. The mean directions $\bm n_0$ are Haar distributed across calibration intervals, so the unconditional direction remains Haar. The second relation is a declared moment-matching model: it equates the mean resultant length of the von Mises--Fisher density to an exponential correlation. It does not derive a disturbance process, establish a causal predictor, or make a one-point density into a physical time series.

The fixed receiver is allowed exact knowledge of $\bm n_0$ from its calibration, may choose any fixed POVM for that interval, and retains each realized $s$ in its likelihood. This deliberately strong comparator isolates the effect of slow directional concentration. Its calibration time enters $d_{\rm f}$, while its calibration error is optimistically zero. The conditioned branch knows the predicted instantaneous $s$ in the reduced model. Let $J_{{\rm f},H}^{\star}$ and $J_{{\rm c},H}$ denote the corresponding admitted Haar information, and set
\begin{equation}
 G_H=\drel J_{{\rm c},H}/J_{{\rm f},H}^{\star}.
 \label{eq:haar_gain}
\end{equation}
This number alone is not the ratio against the recalibrated fixed receiver.

\subsection{A sufficient bound against arbitrary fixed POVMs}
The concentration model admits a useful analytic lower bound on the \emph{actual} gain. Relative to Haar measure $dH=d\Omega/(4\pi)$, its density obeys
\begin{equation}
 w_\zeta(\bm n\mid\bm n_0)\le M_\zeta,
 \qquad M_\zeta=\frac{2\zeta}{1-e^{-2\zeta}},\qquad M_0=1.
 \label{eq:density_max}
\end{equation}
For every fixed POVM and every mean direction,
\begin{align}
 \int w_\zeta J(s;\Pi)\,dH
 &\le M_\zeta\int J(s;\Pi)\,dH\nonumber\\
 &\le M_\zeta J_{{\rm f},H}^{\star}.
 \label{eq:density_proof}
\end{align}
Optimization over arbitrary finite-outcome fixed measurements and averaging over $\bm n_0$ preserve this inequality. The conditioned numerator remains $J_{{\rm c},H}$ because its marginal ensemble is Haar. If $J_{{\rm f},H}^{Q}$ is the Haar average of the pointwise fixed-branch quantum bound, then the true recalibrated fixed optimum satisfies both upper bounds. Hence
\begin{equation}
 G^{\star}_{\rm wall}\ge
 \max\!\left\{\frac{G_H}{M_\zeta},\,
 \drel\frac{J_{{\rm c},H}}{J_{{\rm f},H}^{Q}}\right\}.
 \label{eq:diversity_lower}
\end{equation}
Unlike a comparison only with selected fixed projective bases, Eq.~\eqref{eq:diversity_lower} is a sufficient gain certificate within the stipulated ensemble and rate model.

For the $\epsilon=0.10$ assignment above, $G_H=1.071528$ and the quantum-bound term is $0.7106$. Solving $M_\zeta=G_H/G_{\rm tar}$ gives sufficient directional conditions
\begin{align}
 t_{\rm coh}/T_{\rm cal}&<0.26598 &&(G^{\star}_{\rm wall}>1),\nonumber\\
 t_{\rm coh}/T_{\rm cal}&\le0.19514 &&(G^{\star}_{\rm wall}\ge1/0.95).
 \label{eq:sufficient_temporal}
\end{align}
The second certifies at least a 5\% local exposure saving. No choice of diversity can make this stress point meet the 20\% target, since its Haar gain is only $1.0715$. At the more favorable final row of Table~\ref{tab:penalties}, however, the same density argument certifies $G\ge1.25$ whenever $t_{\rm coh}/T_{\rm cal}\le0.32729$. This is a joint rate-and-diversity example in the concentration model, not evidence that an HWO predictor realizes those parameters.

\subsection{A constructive upper bound and the remaining interval}
A feasible fixed measurement gives a complementary upper bound on gain. We search over a projective axis $\bm h$ described by polar angle $\theta$ and azimuth $\phi$ relative to the planet Bloch axis. All angular coordinates are in radians. Let $\bm p$ be the dimensionless unit Bloch vector of $P$, and put $\mu_0=\bm n_0\cdot\bm p\in[-1,1]$. Then
\begin{equation}
 \delta_0=\mu_0\cos\theta+\sqrt{1-\mu_0^2}\sin\theta\cos\phi.
 \label{eq:axis_geometry}
\end{equation}
For $z=\bm n\cdot\bm h$, integration over the azimuth about $\bm h$ gives
\begin{equation}
 f_\zeta(z\mid\delta_0)=\frac{\zeta e^{\zeta\delta_0z}}{2\sinh\zeta}
 I_0\!\left(\zeta\sqrt{1-\delta_0^2}\sqrt{1-z^2}\right),
 \label{eq:vmf_marginal}
\end{equation}
Here $z$ is a dimensionless Bloch cosine, distinct from the photon-mode overlap $u$; $I_0$ is the modified Bessel function of the first kind and order zero, and $f_0=1/2$. With scalar overlap $a_\theta=(1+\cos\theta)/2$ and $b_0=\beta+\ddet$, the fixed-axis information is
\begin{align}
 J_{\rm f}(\theta,\phi\mid \mu_0)=\int_{-1}^1 &f_\zeta(z\mid\delta_0)
 \left[\frac{a_\theta^2}{b_0+Ca_\theta+A(1+z)/2}\right.\nonumber\\[-2pt]
 &\left.+\frac{(1-a_\theta)^2}{b_0+C(1-a_\theta)+A(1-z)/2}\right]dz.
 \label{eq:axis_objective}
\end{align}
We retain the best axis found by a two-angle scan and local refinement, then average over uniform $\mu_0\in[-1,1]$. Call the resulting constructive information $J_{\rm f}^{\rm con}$. It need not be the global projective maximum to give the valid ordering
\begin{equation}
 G^{\star}_{\rm wall}\le
 \drel\frac{J_{{\rm c},H}}{J_{\rm f}^{\rm con}}.
 \label{eq:constructive_upper}
\end{equation}
A better feasible fixed measurement only tightens this upper bound. We do not label a local optimizer as a global POVM proof.

\begin{figure}[t]
\centering\includegraphics[width=\columnwidth]{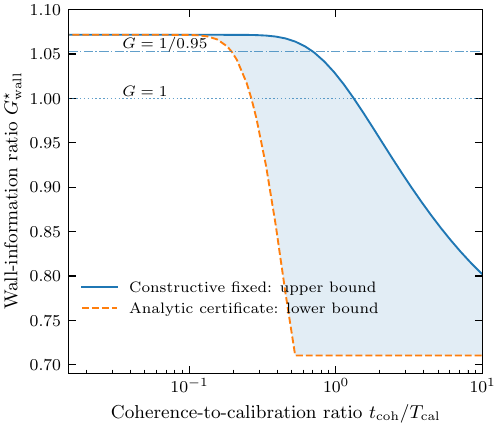}
\caption{Directional-diversity bounds for the concentration model at $(r,\gamma,\tau,\kappa,\drel)=(10,0.5,0.8,0.1,0.9)$. The lower curve is the analytic arbitrary-fixed-POVM certificate in Eq.~\eqref{eq:diversity_lower}; the upper curve uses the feasible two-angle fixed measurement in Eq.~\eqref{eq:constructive_upper}. The shaded interval contains the true gain. The 5\%-saving lower and upper crossings are $0.19514$ and $0.69974$; the break-even crossings are $0.26598$ and $1.32950$. The bounds meet in the Haar limit and approach the transmission/detector penalty as a fixed basis recovers the direction. The mapping to coherence time is model-specific.}
\label{fig:temporal}
\end{figure}

The numerical objective is fully specified by Eqs.~\eqref{eq:axis_geometry}--\eqref{eq:axis_objective}. Axis inversion and reflection permit $0\le\theta\le\pi/2$ and $0\le\phi\le\pi$. We use a $21\times11$ grid, retain its best three starts, and refine with bounded two-dimensional optimization. The mean-direction and conditional integrals use 32 and 128 Gauss--Legendre nodes; 48 and 192 nodes provide a convergence check. The grid candidate is always retained, so refinement cannot lower the constructed information. Exponentially scaled Bessel evaluation avoids overflow.

Figure~\ref{fig:temporal} combines the analytic lower and constructive upper bounds. For the same stress point, the upper curve crosses break-even at $t_{\rm coh}/T_{\rm cal}=1.32950$ and 5\% saving at $0.69974$. The model therefore certifies a 5\% saving below $0.19514$, leaves the intermediate interval unresolved, and excludes it above $0.69974$ along the evaluated slice. The crossings are quadrature-converged numerical results for a constructive comparator, not universal instrumental constants. 

\subsection{Timing and information must hold together}
Directional diversity does not establish prediction accuracy. In Eq.~\eqref{eq:vmf}, $t_{\rm coh}$ parameterizes a conditional direction distribution; it is not automatically a forecast horizon. The operational requirement is Eq.~\eqref{eq:causality} together with an acceptable held-out error throughout the admitted interval. Under the isotropic error model, this can be stated as
\begin{equation}
 \epsilon_{\max}=\sup_{h\in[t_{\rm ff},\,t_{\rm ff}+T_{\rm hold}]}\epsilon(h)
 \le\epsilon_{\rm allowed},
 \label{eq:horizon_error}
\end{equation}
where the allowed error must be evaluated jointly with the rates and duties, rather than assigned independently. Predictable periodic dynamics can remain forecastable beyond an ordinary correlation time.

One may additionally adopt a conservative coherence-limited screen $\max(T_{\rm upd},t_{\rm ff})<t_{\rm coh}$. Within that operational convention, compatibility with the constructive 5\%-saving exclusion requires
\begin{equation}
 \frac{\max(T_{\rm upd},t_{\rm ff})}{T_{\rm cal}}<0.69974.
 \label{eq:necessary_window}
\end{equation}
This is a model- and screen-dependent compatibility condition, not a universal necessary condition for predictive control or a sufficient hardware criterion.

For an assigned example, take $T_{\rm cal}=60\,\mathrm{s}$, $T_{\rm upd}=T_{\rm hold}=0.10\,\mathrm{s}$, and $t_{\rm ff}=0.02\,\mathrm{s}$. The prediction is required through a $0.12\,\mathrm{s}$ horizon. At the stipulated error fraction $\epsilon=0.10$, the sufficient diversity condition for 5\% saving is $t_{\rm coh}\le11.71\,\mathrm{s}$, whereas the constructive exclusion begins at approximately $41.98\,\mathrm{s}$.

A moment-level forecast check clarifies the required predictor. If the exponential correlation is additionally interpreted as an isotropic Markov direction model, persistence of an exactly known last direction gives mean Bloch alignment $R_h=e^{-h/t_{\rm coh}}$. At $t_{\rm coh}=1\,\mathrm{s}$ and $h=0.12\,\mathrm{s}$, $R_h=0.8869$, or mean direction infidelity $(1-R_h)/2=0.05654$. The corresponding projector mean is $R_hS_{\rm last}+(1-R_h)I/2$, so Eq.~\eqref{eq:mismatch} uses $\epsilon=1-R_h=0.1131$, not the infidelity itself. Under this additional Markov assumption, $\epsilon\le0.10$ requires a horizon at most $0.10536\,\mathrm{s}$ or $t_{\rm coh}\ge1.139\,\mathrm{s}$ at the same horizon. Retaining both $1\,\mathrm{s}$ and $0.12\,\mathrm{s}$ instead requires additional predictable structure beyond this persistence model; Eq.~\eqref{eq:vmf} alone supplies no such predictor.

Faster updates can instead become limited by switching dead time: without overlap, a blanking time $t_{\rm blank}$ per update alone gives $d_{\rm c}\le1-t_{\rm blank}/T_{\rm upd}$. Meeting a fast timing inequality while losing the required duty is not an eligible operating point.

\section{Spectroscopic inference and receiver selection}\label{sec:spectroscopy}
\subsection{Spatial information does not by itself identify a molecule}
A molecular amplitude cannot be separated from an unknown continuum using one spatial mode in one spectral block when both perturb the same rate. Cross-wavelength information is essential, as in established quantum-limit and template-based spectroscopy analyses~\cite{Huang2023,Santamaria2025,Ruffio2026}. Let $a$ index simultaneous wavelength--polarization blocks, $\vartheta$ be a dimensionless signed feature amplitude, and $\nu$ a dimensionless shared-continuum perturbation. For receiver $j$, define
\begin{align}
 R_{j,a}(\vartheta,\nu\mid s)
 ={}&\tau_{j,a}\beta_a
 [I+r_aS+\gamma_aP\nonumber\\
 &+(m_a\vartheta+q_a\nu)P],
 \label{eq:spectral_rate}
\end{align}
with $\tau_{{\rm f},a}=1$ and $\tau_{{\rm c},a}=\tau_a$. Detector rates are added per output after this optical rate operator. The dimensionless templates $m_a$ and $q_a$ are fixed in advance; rates are positive under the stated parameter domain. All bases are designed at the fiducial $\vartheta=\nu=0$, rather than using the unknown realized parameter to select a measurement.

For implemented effects, the admitted spatial kernel is
\begin{equation}
 K_{j,a}=\E_s\sum_o
 \frac{[\tau_{j,a}\beta_a\Tr(P\Pi_{j,a,o})]^2}
 {\Tr[R_{j,a}(0,0\mid s)\Pi_{j,a,o}]+d_{{\rm det},a}}.
 \label{eq:spectral_kernel}
\end{equation}
The index $o$ labels physical detector outputs. Because $\vartheta$ and $\nu$ are dimensionless, $K_{j,a}$ has units $\mathrm{s}^{-1}$. In the reduced model it equals $\beta_a^2J_{j,a}$ from Eq.~\eqref{eq:detector_rates}. With block-independent counts conditional on the shared parameters, the information rate and its continuum-profiled value are
\begin{align}
 F_j&=\sum_a K_{j,a}\bm v_a\bm v_a^T,
 \quad\bm v_a=(m_a,q_a)^T,\label{eq:fisher_spectrum}\\
 J_{\vartheta\mid\nu}^{(j)}
 &=F_{\vartheta\vartheta}^{(j)}
 -\frac{[F_{\vartheta\nu}^{(j)}]^2}{F_{\nu\nu}^{(j)}}.
 \label{eq:schur}
\end{align}
The matrix $F_j$ and the continuum-profiled information rate $J_{\vartheta\mid\nu}^{(j)}$ also have units $\mathrm{s}^{-1}$; $tF_j$ is dimensionless. The superscript $T$ denotes transpose. Absolute duties multiply the corresponding $K_{j,a}$ when wall time is the resource. We initially take no external continuum information. A fixed external calibration instead contributes a total matrix $F_{\rm cal}$ to $tF_j$; it must not be treated as a science-time-proportional information rate unless the calibration effort is scaled with exposure.

The determinant and profiled information can be written
\begin{align}
 \det F_j&=\sum_{a<b}K_{j,a}K_{j,b}(m_aq_b-m_bq_a)^2,
 \label{eq:det_spectral}\\
 J_{\vartheta\mid\nu}^{(j)}
 &=\frac{\sum_{a<b}K_{j,a}K_{j,b}(m_aq_b-m_bq_a)^2}
         {\sum_aK_{j,a}q_a^2}.
 \label{eq:schur_pairwise}
\end{align}
These follow by expanding the $2\times2$ determinant. The feature is identifiable only if the molecular and continuum templates are not proportional on the informative blocks. A large spatial information ratio cannot cure a spectral degeneracy: when the two templates approach proportionality, both receivers' absolute profiled information can tend to zero even while their ratio remains finite. Additional nuisance templates extend this rank and conditioning requirement; stochastic correlations between blocks require more than the block-additive likelihood used here.

\subsection{Template bounds and the role of independent gates}
Let $\rho_a=K_{{\rm c},a}/K_{{\rm f},a}$, with extrema $\rho_-$ and $\rho_+$. Since every $\bm v_a\bm v_a^T$ is positive semidefinite,
\begin{equation}
 \rho_-F_{\rm f}\preceq F_{\rm c}\preceq\rho_+F_{\rm f}.
 \label{eq:loewner}
\end{equation}
The representation
\begin{equation}
 J_{\vartheta\mid\nu}(F)=\min_z(1,z)F(1,z)^T
 \label{eq:schur_variational}
\end{equation}
proves monotonicity under this order and homogeneity of degree one. Therefore
\begin{equation}
 \rho_-\le\frac{J_{\vartheta\mid\nu}^{({\rm c})}}
                      {J_{\vartheta\mid\nu}^{({\rm f})}}\le\rho_+.
 \label{eq:template_bounds}
\end{equation}
The interval is sharp over unrestricted identifiable template pairs: one may put a molecular score in an extremal-ratio block and an independent continuum score in another. It need not be sharp for a prescribed physical spectrum. A common, fixed external calibration preserves monotonicity at fixed exposure but generally changes the homogeneous ratio interval to $[\min(1,\rho_-),\max(1,\rho_+)]$.

The contrast exclusion also propagates through this nuisance fit. If every participating block has $x_a=1$ and the loss budget of Eq.~\eqref{eq:design_distribution_free}, its duty-charged conditioned kernel is at most $0.81$ of the corresponding planet-projection kernel. Summing these rank-one contributions and using Eq.~\eqref{eq:schur_variational} gives the same upper ratio for every identifiable feature--continuum template pair without external calibration. This statement uses the bound in \emph{every} block; one favorable or excluded spatial block cannot determine the complete spectral gain.

For independent channel gates, define the wall kernels
\begin{equation}
 \widetilde K_a=\max\{d_{{\rm f},a}K_{{\rm f},a},\,
                         d_{{\rm c},a}K_{{\rm c},a}\}.
 \label{eq:fallback}
\end{equation}
Any other blockwise fixed/conditioned route has a no-larger kernel in every block. Consequently its Fisher matrix is no larger in positive-semidefinite order, and Eq.~\eqref{eq:schur_variational} proves that Eq.~\eqref{eq:fallback} maximizes the local profiled information. This corollary of block additivity guarantees no loss relative to all-fixed routing when kernels and costs are known. The same argument works for several nuisance amplitudes by replacing $z$ in Eq.~\eqref{eq:schur_variational} with a vector, and for a common calibration matrix added to both routes.

A common instrument gate changes the optimization. If any conditioned channel imposes one shared duty loss on every channel, independent maximization of \emph{duty-charged} kernels is invalid. For a common gate-on duty independent of the selected subset, first select the best admitted kernels within that regime, then compare its complete Fisher matrix or profiled information with full-duty all-fixed operation. If overhead depends on the subset, update frequency, or previous commands, the actual coupled scheduling problem must be solved instead.

\subsection{A signed absorption example with a shared continuum}
Consider three synthetic blocks centered at $0.925$, $0.945$, and $0.965\,\mu\mathrm m$, with common $\beta_a=0.002\,\mathrm{s}^{-1}$ per mode and $d_{{\rm det},a}=0.0002\,\mathrm{s}^{-1}$ per output. Their other parameters and computed kernels are given in Table~\ref{tab:spectral_blocks}. These are assigned channel values, not a propagated molecular atmosphere or instrument spectrum. Choose a central absorption feature and a flat unknown continuum,
\begin{equation}
 \bm m=(0,-1,0),\qquad \bm q=(1,1,1).
 \label{eq:templates}
\end{equation}
The two sidebands now matter even though their direct molecular derivative is zero: they estimate the continuum against which the absorption is measured. For arbitrary effective kernels, put $S_K=K_1+K_3$. Equation~\eqref{eq:schur} reduces to
\begin{equation}
 J_{\vartheta\mid\nu}=\frac{K_2S_K}{K_2+S_K}.
 \label{eq:absorption_information}
\end{equation}
This is the inverse-variance structure of comparing the central channel with an independently estimated reference continuum. Better central-band precision eventually becomes limited by $S_K$, and better continuum precision eventually becomes limited by $K_2$. It explains why discarding a nominally ``feature-free'' channel can damage the molecular measurement.

\begin{table*}[t]
\caption{Assigned heterogeneous blocks and admitted kernels for the absorption example. Rates are physical; $K$ has units $\mathrm{s}^{-1}$. All rows use $\beta=0.002\,\mathrm{s}^{-1}$ per mode and the same per-output detector rate $0.0002\,\mathrm{s}^{-1}$. The last column charges 90\% conditioned duty and full fixed duty.}
\label{tab:spectral_blocks}
\centering\small
\begin{tabular}{cccccccc}
\toprule
$\lambda$ ($\mu\mathrm m$)&$r_a$&$\gamma_a$&$\tau_a$&$K_{{\rm f},a}$ ($10^{-4}\,\mathrm{s}^{-1}$)&$K_{{\rm c},a}$ ($10^{-4}\,\mathrm{s}^{-1}$)&$\rho_a$&$0.9\rho_a$\\
\midrule
0.925&7&0.25&0.752&5.20616&5.34416&1.02651&0.92386\\
0.945&10&0.50&0.800&3.96200&4.71711&1.19059&1.07153\\
0.965&12&0.70&0.820&3.39480&4.30538&1.26823&1.14140\\
\bottomrule
\end{tabular}
\end{table*}

Figure~\ref{fig:routing} shows the local decision: at 90\% conditioned duty, keep the first block fixed and condition the other two. Table~\ref{tab:spectral_policies} propagates those choices through the shared-continuum fit. Uniform conditioned operation has local wall gain $1.0512$, whereas channel-local fallback gives $1.0665$, or 6.24\% shorter asymptotic exposure. The fixed and fallback profiled informations are $2.71250\times10^{-4}$ and $2.89294\times10^{-4}\,\mathrm{s}^{-1}$. The full template interval is $[0.92386,1.14140]$ for uniform duty-charged conditioning and $[1,1.14140]$ for independent fallback. Thus uniform conditioning can hurt another identifiable template even though it helps the declared one.

\begin{figure}[t]
\centering\includegraphics[width=\columnwidth]{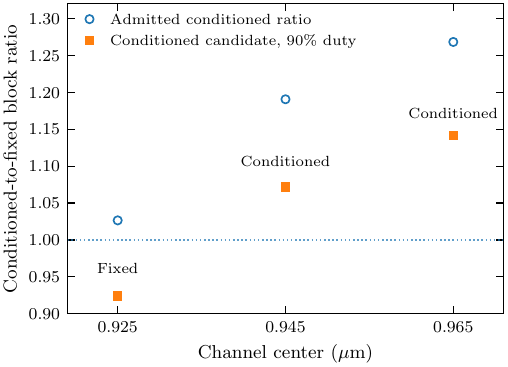}
\caption{Receiver selection for the three discrete blocks in Table~\ref{tab:spectral_blocks}. Admitted ratios exceed unity in all three, but the first channel loses after its own conditioned duty is charged. Independent gating therefore selects fixed, conditioned, conditioned. No interpolation between wavelengths is assumed. The first channel still contributes to absorption precision through its continuum constraint, as made explicit by Eq.~\eqref{eq:absorption_information}.}
\label{fig:routing}
\end{figure}

\begin{table}[t]
\caption{Nuisance-adjusted information for Eq.~\eqref{eq:templates}, with $J$ in $10^{-4}\,\mathrm{s}^{-1}$. The continuum penalty is the fractional subtraction from the raw $F_{\vartheta\vartheta}$ in the same row. ``Admitted'' excludes a duty penalty; the other nonreference rows use the specified wall-time topology.}
\label{tab:spectral_policies}
\centering\small\setlength{\tabcolsep}{4pt}
\begin{tabular}{lrrr}
\toprule
Policy&$J$&Penalty (\%)&$G$\\
\midrule
All fixed&2.71250&31.54&1.0000\\
All conditioned, admitted&3.16831&32.83&1.1680\\
All conditioned, 90\% duty&2.85147&32.83&1.0512\\
Independent fallback&2.89294&31.86&1.0665\\
Mixed route, common gate&2.83796&33.15&1.0463\\
\bottomrule
\end{tabular}
\end{table}

With one common 90\% instrument gate, all three admitted conditioned kernels exceed their fixed counterparts. The optimal gate-on route is therefore all conditioned, not the mixed route. Its gain $1.0512$ exceeds the mixed common-gate value $1.0463$, and both must still be compared with the all-fixed value one. The topology determines the correct reference resource.

A normalized curvature template differs from the absorption template only by a scale factor and a continuum component:
\begin{equation}
 \frac{(1,-2,1)}{\sqrt6}
 =\sqrt{\frac32}(0,-1,0)+\frac{(1,1,1)}{\sqrt6}.
 \label{eq:template_reparameterization}
\end{equation}
The free continuum absorbs the last term. For every set of kernels, the curvature template therefore has $3/2$ times the profiled information of Eq.~\eqref{eq:templates} and the same receiver gain.

\subsection{Finite-count absorption with the continuum profiled}
We test a finite signed perturbation with a jointly fitted continuum in the same three-block model. Set $\vartheta_1=0.12$, true $\nu=0$, and scheduled duration $t=3.0\times10^6\,\mathrm{s}$ ($833.3\,\mathrm h$, with one hour $\mathrm h=3600\,\mathrm s$). Both fixed and independent-fallback routes use the same pre-receiver orientation ensemble and physical rates from Table~\ref{tab:spectral_blocks}; their incompatible measurements use independent photon draws. The SLD bases are frozen at the fiducial continuum and null feature.

The orientation integral is represented by 32 Gauss--Legendre nodes $u_i$ with weights $w_i$ summing to one. Only $u$ is needed because the SLD output rates and planet overlaps are invariant under the leakage phase rotation in this model. These nodes specify deterministic fractional dwell times, not a sampled temporal WFS forecast. Let $b_{aio}$ and $g_{aio}=\tau_{j,a}\beta_a\Tr(P\Pi_{j,a,o})$ be the physical null rate and amplitude slope in each output. The cell means are
\begin{equation}
 \mu_{aio}(\vartheta,\nu)
 =t\,d_{j,a}w_i[b_{aio}+g_{aio}(m_a\vartheta+q_a\nu)].
 \label{eq:finite_cells}
\end{equation}
Both hypotheses fit the same nuisance interval $\nu\in[-0.2,0.2]$. All optical component rates remain nonnegative over this domain for the two tested feature values. The test statistic is the difference of profiled Poisson log likelihoods,
\begin{align}
 \ell(\vartheta,\nu)&=\sum_{aio}[n_{aio}\ln\mu_{aio}(\vartheta,\nu)
                               -\mu_{aio}(\vartheta,\nu)],\nonumber\\
 T&=\max_\nu\ell(\vartheta_1,\nu)-\max_\nu\ell(0,\nu).
 \label{eq:profile_statistic}
\end{align}
The omitted factorial terms cancel. Each one-dimensional nuisance likelihood is concave and is solved by bracketed Newton iterations; a numerical boundary is not substituted for a prior distribution.

For each route, three independent streams of $10^5$ trials respectively calibrate the upper 1\% null threshold, evaluate held-out null events, and evaluate alternative detections. The seed is 20260923, with independent seed sequences indexed by route and stream. Calibration and evaluation use true $\nu=0$; the procedure does not claim uniform false-alarm control over every nuisance value. The false-alarm probability $P_{\rm FA}$ and detection probability $P_{\rm D}$ are dimensionless fractions under the null and alternative, respectively. Table~\ref{tab:finite} reports them with conditional 95\% Wilson intervals. No profile estimate reached a boundary. This is a finite-count calculation of the specified absorption coordinate and shared continuum, not a forecast for a named molecule or atmospheric retrieval.

\begin{table*}[t]
\caption{Finite-count signed-absorption test at the same scheduled duration $3.0\times10^6\,\mathrm{s}$ and $\vartheta_1=0.12$, true $\nu=0$. Each route uses independent $10^5$-trial threshold-calibration, null-evaluation, and alternative-evaluation streams. Parentheses are 95\% Wilson intervals conditional on the calibrated threshold; they exclude threshold-calibration and physical-model uncertainty. The last column is the local one-sided Wald approximation, not the finite Poisson calculation.}
\label{tab:finite}
\centering\small
\begin{tabular}{lccc}
\toprule
Route&Held-out $P_{\rm FA}$&$P_D$&Local Wald $P_D$\\
\midrule
All fixed&$0.01048\ (0.00987,0.01113)$&$0.87114\ (0.86905,0.87320)$&0.86364\\
Independent fallback&$0.01009\ (0.00949,0.01073)$&$0.89479\ (0.89287,0.89668)$&0.88664\\
\bottomrule
\end{tabular}
\end{table*}

The local check is $P_D\simeq\Phi_N[\vartheta_1\sqrt{tJ_{\vartheta\mid\nu}}-\Phi_N^{-1}(0.99)]$, where $\Phi_N$ is the standard-normal distribution function. The exact-profile simulation differs by less than $0.009$ in detection probability for the two routes. Equality is not expected at a finite 24\% reduction of the central planet continuum. In particular, the ratio of the two detection probabilities is not an exposure-time gain. The comparison shows that the signed inference and fallback can be implemented in a calibrated count likelihood at the stated nuisance value; broader nuisance coverage and field uncertainty remain separate requirements.

\section{Physical applicability and limitations}\label{sec:discussion}
\subsection{From controlled fields to the selected modal pair}
The results concern a predictable \emph{post-control} residual. Given a held-out WFS history, an optical propagation should supply a conditional mean leakage field $\widehat{\bm e}$ and its error covariance $\Sigma_e$, in photon-rate-normalized modal coordinates: $\bm e$ and $\widehat{\bm e}$ have units $\mathrm{s}^{-1/2}$, and $\Sigma_e$ has units $\mathrm{s}^{-1}$. A natural decomposition is
\begin{equation}
 \E[\bm e\bm e^\dagger\mid y]
 =\widehat{\bm e}\widehat{\bm e}^{\dagger}+\Sigma_e,
 \qquad A=\|\widehat{\bm e}\|^2,\quad
 \ket s=\widehat{\bm e}/\sqrt A.
 \label{eq:conditional_field}
\end{equation}
Finite stellar diameter, diffuse emission, unpredicted wavefront error, polarization leakage, and pointing uncertainty contribute additional generally nonisotropic rate operators. If the selected pair contains one dominant predictable mode and an approximately isotropic irreducible floor, Eq.~\eqref{eq:reduced} may be adequate. An aligned unequal floor is covered by Theorem~\ref{thm:fixed}. A nonaligned covariance, several important predictable modes, or a mixed planet operator requires direct multimode optimization. Nonunitary ``whitening'' of a background is not a lossless change of coordinates and cannot be used to import the Haar theorem without charging its effect on photons, signal, and direction statistics.

The modal quantities must be estimated jointly. A field can be bright but unpredictable, accurately predicted but almost static, or diverse only in intervals lost to switching. Separate favorable marginal values of $r$, $\epsilon$, $t_{\rm coh}/T_{\rm cal}$, and throughput do not establish a common eligible subset. The independent-gate rule also needs held-out kernels or conservative confidence bounds; selecting a route using the same upward noise fluctuation later used to claim its gain is not covered by Eq.~\eqref{eq:fallback}.

\subsection{What timing and power spectra can establish}
A possible physical opportunity is residual content that is observable and predictable but not sufficiently rejected by the common wavefront controller. It may reflect different actuator bandwidths or accessible optical subspaces. Shared WFS integration and readout delays do not disappear from the absolute causal requirement in Eq.~\eqref{eq:causality}. They cancel only when comparing the \emph{differential} downstream delay of two actuators. A rapid electronic calculation or a fast switch transition is not an end-to-end measured latency.

Published HWO control concepts and space-coronagraph simulations provide concrete transfer-function models to test~\cite{Gerard2026,Potier2022}. Fast silicon-photonic switching has also been demonstrated in a different wavelength and device setting~\cite{Dupuis2015}. These works motivate measurement of the relevant transfer functions; they do not establish a visible/near-infrared HWO analyzer with the assumed loss, phase fidelity, radiation tolerance, and admitted duty. Photonic lantern and integrated-coronagraph architectures provide additional possible implementations~\cite{Xin2022,Sirbu2024}, each requiring its own calibrated optical map.

A scalar post-control wavefront-error (WFE) power spectral density $S_{\rm post}(f)$ has units $\mathrm{m}^2\,\mathrm{Hz}^{-1}$ when wavefront error is an optical path length in metres. Frequency $f$ is in hertz ($\mathrm{Hz}=\mathrm{s}^{-1}$); $f_{\rm Nyq}$ is the Nyquist frequency, half the wavefront sampling rate. For a frequency mask $\mathcal M(f)$ identifying bins that fail a declared control-rejection criterion but pass prediction and analyzer-response criteria, one may report
\begin{equation}
 f_{\rm WFE}=\frac{\int_0^{f_{\rm Nyq}}\mathcal M(f)S_{\rm post}(f)\,df}
                  {\int_0^{f_{\rm Nyq}}S_{\rm post}(f)\,df}.
 \label{eq:psd_mask}
\end{equation}
This is a variance fraction, not its square root. Even a nonzero value does not determine the leakage photon rate, complex-field cross spectra, evolution of the projector $S$, or the information-weighted eligible fraction. Without those quantities, integrating a convenient frequency window is not an eligibility demonstration. 

\subsection{Persistent latent rates and calibration error}\label{sec:latent}
When the conditional rate is uncertain over many photons, the count likelihood must retain that persistence. In a general block $a$ and interval $k$, with latent controlled state $x_k$, transfer matrix $H_{ak}$, and implemented effect $\Pi_{mak}(y_k)$, write
\begin{align}
 n_{mak}\mid x_k,y_k&\sim\operatorname{Pois}[\Lambda_{mak}(x_k)],\nonumber\\
 \Lambda_{mak}(x_k)&=\Delta t_k\bigl\{d_{ma}\nonumber\\
 &\quad+\Tr[\Pi_{mak}(y_k)H_{ak}R_{ak}(x_k)H_{ak}^{\dagger}]\bigr\}.
 \label{eq:full_likelihood}
\end{align}
Here $\Delta t_k$ is in seconds, $H_{ak}$ is a dimensionless complex-amplitude transfer matrix, and $R_{ak}$ and $d_{ma}$ are optical and detector-event rates in $\mathrm{s}^{-1}$. The count mean $\Lambda_{mak}$ is dimensionless. Marginalizing a persistent random rate produces a Cox process, namely a Poisson process conditional on a random intensity~\cite{Cox1955}, with
\begin{equation}
 \operatorname{Var}(n\mid y)=\E[\Lambda\mid y]
                      +\operatorname{Var}(\Lambda\mid y).
 \label{eq:cox}
\end{equation}
Shared latent states also create cross-cell covariances. Replacing $\Lambda$ by its mean loses this term. The data-processing identity that a marginalized score is the conditional expectation of the complete-data score gives $J_{\rm marginal}\le\E[J_{\rm complete}]$ when the latent law is parameter independent. It does not justify evaluating a Poisson information at the mean rate.

An analytically tractable nuisance illustrates why a ratio alone is insufficient. With no detector events, let a block-common log-amplitude $\alpha$ multiply the entire optical background, $\lambda_m=\tau_j[e^{\alpha}b_m+cp_m]$. At $\alpha=c=0$, completeness gives $\sum_mp_m=1$ and $\sum_mb_m=\beta\mathcal T$, so profiling this unknown amplitude subtracts
\begin{equation}
 J_{c\mid\alpha}^{(j)}
 =J_c^{(j)}-\frac{\tau_j}{\beta\mathcal T},
 \qquad \tau_{\rm f}=1,\quad\tau_{\rm c}=\tau.
 \label{eq:common_nuisance}
\end{equation}
At $r=10$, $\gamma=0$, and $\tau=0.8$, the resulting ratio is $1.7173$, larger than the Poisson ratio $1.3985$, although \emph{both} absolute informations have decreased. The common amplitude removes a total-count score component, leaving a larger relative benefit in conditioned channel contrast. This is a local nuisance calculation with a known multiplicative form, not a claim of robustness to arbitrary speckles. Nuisance-aware quantum estimation similarly requires the score directions and attainable measurement to be treated together~\cite{Tsang2020}.

Calibration error can be bounded without assigning an average ``crosstalk penalty.'' For a declared subset $\mathcal G$ of predicted states, suppose every retained informative output obeys
\begin{equation}
 |\widetilde g_m|\ge a_{\rm acc}|g_m|,\qquad
 \widetilde b_m\le b_m(1+\delta_b),
 \label{eq:calibration_bounds}
\end{equation}
relative to the already loss- and detector-charged modeled rates and slopes. A termwise comparison gives an implemented information at least $a_{\rm acc}^2/(1+\delta_b)$ times the modeled information on $\mathcal G$. The retained fraction must be weighted by information, $f_J=\E[\mathbf1_{\mathcal G}J]/\E[J]$, rather than by occurrence probability. If the modeled comparison has gain $G_{\rm model}$ against an unchanged calibrated fixed reference, a sufficient bound is
\begin{equation}
 G_{\rm impl}\ge\frac{f_Ja_{\rm acc}^2}{1+\delta_b}G_{\rm model}.
 \label{eq:calibration_gain}
\end{equation}
These are pointwise hypotheses to be established from transfer matrices. The acceptance $a_{\rm acc}$ is a rate-derivative ratio, not an intensity throughput, and effects already charged in the model must not be charged twice. The practical sensitivity of spatial-mode detection to imperfections is well documented~\cite{Linowski2025}.

\subsection{What would constitute an HWO result}
The theory supplies a measurement optimum, an implementation-aware exclusion, and conditional acceptance rules. Turning those into an HWO capability requires a common-controller field calculation with a named pupil, coronagraph, finite star, target scene, and disturbance model. The same withheld field histories must feed both receivers. That calculation must report the joint modal-rate and timing distribution. The direction-independent ceiling remains available whenever its optical and likelihood assumptions hold, but exact non-Haar gains require reoptimization of the fixed comparator; nonaligned floors require their full rate operators.

The next evidentiary step is a dynamic-leakage experiment with measured wavelength- and polarization-resolved transfer matrices, detector-event statistics, latency, and duty. The implemented likelihood must recover calibrated false alarms, interval coverage, and feature precision under held-out prediction errors. Independent channel gates must be demonstrated before using the independent-fallback guarantee. Finally, measured target-dependent exposure distributions must be propagated through atmospheric inference and scheduling; a favorable single-block ratio is not a universal multiplier of observing yield.

\section{Numerical verification and reproducibility}\label{sec:numerics}
All physical input rates, templates, duty conventions, and integration objectives used in the figures and tables are specified in the preceding sections. The accompanying executable calculation regenerates the numerical arrays. Bracketed roots are solved in dimensionless leakage or contrast ratios rather than directly on $10^{-9}$-scale contrasts. The isotropic integrals use 128-point Gauss--Legendre quadrature, with order doubling and independent adaptive integration as checks. Equation~\eqref{eq:closed_integral} is used only where its cancellation is controlled, and Eq.~\eqref{eq:limits} supplies the exact endpoint checks.

Independent matrix tests compare the explicit SLD in Eq.~\eqref{eq:sld_explicit}, a Sylvester solve, and the classical information of its eigenbasis for 10,000 randomly generated positive two-mode backgrounds. Random complete POVMs with 2, 3, 4, 6, and 8 outcomes provide implementation checks of Theorem~\ref{thm:fixed}, including full-rank effects and unequal aligned floors. These searches are diagnostics, not the proof. The Haar contrast bounds are checked using both the physical-rate expression and Eq.~\eqref{eq:exact_gain}. The direction-independent ceiling is additionally checked on 20,000 random discrete direction distributions with 1--32 support points, positive input rates from $10^{-6}$ to $1\,\mathrm{s}^{-1}$, and zero-floor cases. These tests compare the conditioned optimum to the feasible planet projection, so no unproved non-Haar fixed optimum is assumed. Independent matrix and output-rate checks cover 1,000 of these distributions. Independent phase quadrature at 200 effect overlaps checks Eq.~\eqref{eq:equatorial_fixed}; a known-static-direction control verifies that a fixed SLD instead removes the preloss advantage. Uniform rescaling of all physical rates by a factor $k$ verifies $J\mapsto J/k$ and gain invariance. The temporal integrals and their refinement are specified in Sec.~\ref{sec:temporal}; the reported crossings are stable well beyond the displayed digits, without elevating local optimization to global optimality.

For spectroscopy, 2,000 random identifiable template comparisons with two to eight equal-contrast blocks verify propagation of the $0.81$ ceiling through continuum profiling. All eight three-channel routes are enumerated to check the independent-gate corollary. Direct cell-rate Fisher matrices at 32 and 64 orientation nodes agree with Eq.~\eqref{eq:fisher_spectrum}. The profile solver in Eq.~\eqref{eq:profile_statistic} is independently checked against bounded scalar likelihood optimization. The finite-count streams in Table~\ref{tab:finite} contain 600,000 realizations in total and require at most five Newton steps in the executed calculation. Their seed, role separation, likelihood, and nuisance domain are specified in Sec.~\ref{sec:spectroscopy}.

The local limit is checked independently using the Poisson relative entropy. For null rates $b_m$ and slopes $g_m$,
\begin{equation}
 \frac{D(P_c\Vert P_0)}{t}
 =\sum_m\left[(b_m+cg_m)\ln\!\left(1+\frac{cg_m}{b_m}\right)-cg_m\right].
 \label{eq:kl}
\end{equation}
Expansion gives $2D/(tc^2)\to\sum_mg_m^2/b_m$. This checks the relation between the count likelihood and the local information, not equality of local and finite-alternative gains. Figures are generated from the same full-precision rates and roots as the tables. Decimal precision in illustrative astrophysical inputs is a reproducibility convention, not a claim about their physical accuracy.

\section{Conclusions}\label{sec:conclusions}
The useful resource is predictable coherence between the two selected spatial modes, not suppression of an estimated stellar mean. A conditioned analyzer can rotate this coherence into informative count differences, while its scalar transmission reduces both planet response and optical background. In the Haar ensemble, the exact fixed optimum is planet-matched single-mode injection, including known unequal aligned optical floors and matched detector-event rates. The two-output SLD attains the conditioned optimum in the stated single-photon class. The comparison therefore concerns measurement selection relative to a strong optical reference, not intrinsically quantum versus classical hardware.

The photons that share the planet mode place a stringent limit on this resource. For an isotropic optical floor, equal predictable-leakage and planet contrasts give $\Gwall\le0.8100$ for \emph{any distribution of recorded directions} at $\tau=0.80$ and $\drel=0.90$; Haar averaging tightens this to $0.7791$. A gain requires $x>x_{\rm nec}$, with $x_{\rm nec}\simeq2.248$ for arbitrary directions and $3.261$ for Haar directions; finite backgrounds impose stronger restrictions. In the explicit one-exozodi Haar budget, a 20\% exposure reduction requires $\Cpred\simeq3.41\times10^{-9}$, about 34 times the assigned planet contrast $10^{-10}$. These dimensionless contrasts refer to the selected modal pair, not a spatially averaged dark hole.

Predictability and directional diversity must coexist. In the specified von Mises--Fisher concentration model at $(r,\gamma,\tau,\kappa,\drel)=(10,0.5,0.8,0.1,0.9)$, the analytic certificate guarantees gain for $t_{\rm coh}/T_{\rm cal}<0.26598$ and at least 5\% local exposure saving for $t_{\rm coh}/T_{\rm cal}\le0.19514$. The more favorable final row of Table~\ref{tab:penalties} certifies 20\% saving up to $0.32729$. These are dimensionless diversity windows at the stated rates and prediction error, not measured instrumental timescales. The forecast must also meet the held-out error and admitted-duty requirements over the full physical horizon $t_{\rm ff}+T_{\rm hold}$, in seconds.

Spectral inference adds a further physical constraint: continuum photons in the reference bands determine the precision of the continuum beneath an absorption feature. With known information rates and costs, independent gates permit fixed-basis fallback that cannot decrease local continuum-profiled information; the three-channel example gives a gain of $1.0665$, or $6.24\%$ shorter asymptotic exposure. A shared gate requires a different global comparison. Neither result turns a single-block information ratio into a universal atmospheric-retrieval or observing-yield multiplier.

The decisive next step is a common-controller optical field calculation yielding the joint distribution of $r$, prediction-error power fraction $\epsilon$, and directional coherence time $t_{\rm coh}$, together with modal contrasts, transmission, and duty, followed by a dynamic-leakage experiment with measured transfer matrices. The present article establishes an equal-modal-contrast exclusion independent of direction statistics within the two-mode, conditional-Poisson, scalar-loss model,  and identifies conditional positive regimes that require joint physical validation. It supplies a quantitative test of when prior wavefront information is worth using to select the photon measurement, rather than only to interpret its outcomes.

\begin{acknowledgments}
The work described here was carried out at the Jet Propulsion Laboratory, California Institute of Technology, Pasadena, California, under a contract with the National Aeronautics and Space Administration. \textcopyright\ 2026. California Institute of Technology. Government sponsorship acknowledged.
\end{acknowledgments}


\begin{thebibliography}{30}%
\makeatletter
\providecommand \@ifxundefined [1]{%
 \@ifx{#1\undefined}
}%
\providecommand \@ifnum [1]{%
 \ifnum #1\expandafter \@firstoftwo
 \else \expandafter \@secondoftwo
 \fi
}%
\providecommand \@ifx [1]{%
 \ifx #1\expandafter \@firstoftwo
 \else \expandafter \@secondoftwo
 \fi
}%
\providecommand \natexlab [1]{#1}%
\providecommand \enquote  [1]{``#1''}%
\providecommand \bibnamefont  [1]{#1}%
\providecommand \bibfnamefont [1]{#1}%
\providecommand \citenamefont [1]{#1}%
\providecommand \href@noop [0]{\@secondoftwo}%
\providecommand \href [0]{\begingroup \@sanitize@url \@href}%
\providecommand \@href[1]{\@@startlink{#1}\@@href}%
\providecommand \@@href[1]{\endgroup#1\@@endlink}%
\providecommand \@sanitize@url [0]{\catcode `\\12\catcode `\$12\catcode
  `\&12\catcode `\#12\catcode `\^12\catcode `\_12\catcode `\%12\relax}%
\providecommand \@@startlink[1]{}%
\providecommand \@@endlink[0]{}%
\providecommand \url  [0]{\begingroup\@sanitize@url \@url }%
\providecommand \@url [1]{\endgroup\@href {#1}{\urlprefix }}%
\providecommand \urlprefix  [0]{URL }%
\providecommand \Eprint [0]{\href }%
\providecommand \doibase [0]{https://doi.org/}%
\providecommand \selectlanguage [0]{\@gobble}%
\providecommand \bibinfo  [0]{\@secondoftwo}%
\providecommand \bibfield  [0]{\@secondoftwo}%
\providecommand \translation [1]{[#1]}%
\providecommand \BibitemOpen [0]{}%
\providecommand \bibitemStop [0]{}%
\providecommand \bibitemNoStop [0]{.\EOS\space}%
\providecommand \EOS [0]{\spacefactor3000\relax}%
\providecommand \BibitemShut  [1]{\csname bibitem#1\endcsname}%
\let\auto@bib@innerbib\@empty
\bibitem [{\citenamefont {{National Academies of Sciences, Engineering, and
  Medicine}}(2021)}]{astro2020}%
  \BibitemOpen
  \bibfield  {author} {\bibinfo {author} {\bibnamefont {{National Academies of
  Sciences, Engineering, and Medicine}}},\ }\href
  {https://doi.org/10.17226/26141} {\emph {\bibinfo {title} {Pathways to
  Discovery in Astronomy and Astrophysics for the 2020s}}}\ (\bibinfo
  {publisher} {National Academies Press},\ \bibinfo {address} {Washington,
  DC},\ \bibinfo {year} {2021})\BibitemShut {NoStop}%
\bibitem [{\citenamefont {Stark}\ \emph {et~al.}(2024)\citenamefont {Stark},
  \citenamefont {Mennesson}, \citenamefont {Bryson}, \citenamefont {Ford},
  \citenamefont {Robinson}, \citenamefont {Belikov}, \citenamefont {Bolcar},
  \citenamefont {Feinberg}, \citenamefont {Guyon}, \citenamefont {Latouf},
  \citenamefont {Mandell}, \citenamefont {Rauscher}, \citenamefont {Sirbu},\
  and\ \citenamefont {Tuchow}}]{Stark2024}%
  \BibitemOpen
  \bibfield  {author} {\bibinfo {author} {\bibfnamefont {C.~C.}\ \bibnamefont
  {Stark}}, \bibinfo {author} {\bibfnamefont {B.}~\bibnamefont {Mennesson}},
  \bibinfo {author} {\bibfnamefont {S.~T.}\ \bibnamefont {Bryson}}, \bibinfo
  {author} {\bibfnamefont {E.~B.}\ \bibnamefont {Ford}}, \bibinfo {author}
  {\bibfnamefont {T.~D.}\ \bibnamefont {Robinson}}, \bibinfo {author}
  {\bibfnamefont {R.}~\bibnamefont {Belikov}}, \bibinfo {author} {\bibfnamefont
  {M.~R.}\ \bibnamefont {Bolcar}}, \bibinfo {author} {\bibfnamefont {L.~D.}\
  \bibnamefont {Feinberg}}, \bibinfo {author} {\bibfnamefont {O.}~\bibnamefont
  {Guyon}}, \bibinfo {author} {\bibfnamefont {N.}~\bibnamefont {Latouf}},
  \bibinfo {author} {\bibfnamefont {A.~M.}\ \bibnamefont {Mandell}}, \bibinfo
  {author} {\bibfnamefont {B.~J.}\ \bibnamefont {Rauscher}}, \bibinfo {author}
  {\bibfnamefont {D.}~\bibnamefont {Sirbu}},\ and\ \bibinfo {author}
  {\bibfnamefont {N.~W.}\ \bibnamefont {Tuchow}},\ }\bibfield  {title}
  {\bibinfo {title} {Paths to robust exoplanet science yield margin for the
  {Habitable Worlds Observatory}},\ }\href
  {https://doi.org/10.1117/1.JATIS.10.3.034006} {\bibfield  {journal} {\bibinfo
   {journal} {JATIS}\ }\textbf {\bibinfo {volume} {10}},\ \bibinfo {pages}
  {034006} (\bibinfo {year} {2024})}\BibitemShut {NoStop}%
\bibitem [{\citenamefont {Stark}\ \emph {et~al.}(2025)\citenamefont {Stark},
  \citenamefont {Steiger}, \citenamefont {Tokadjian}, \citenamefont
  {Savransky}, \citenamefont {Belikov}, \citenamefont {Chen}, \citenamefont
  {Krist}, \citenamefont {Macintosh}, \citenamefont {Morgan}, \citenamefont
  {Pueyo}, \citenamefont {Sirbu},\ and\ \citenamefont
  {Stapelfeldt}}]{Stark2025ETC}%
  \BibitemOpen
  \bibfield  {author} {\bibinfo {author} {\bibfnamefont {C.~C.}\ \bibnamefont
  {Stark}}, \bibinfo {author} {\bibfnamefont {S.}~\bibnamefont {Steiger}},
  \bibinfo {author} {\bibfnamefont {A.}~\bibnamefont {Tokadjian}}, \bibinfo
  {author} {\bibfnamefont {D.}~\bibnamefont {Savransky}}, \bibinfo {author}
  {\bibfnamefont {R.}~\bibnamefont {Belikov}}, \bibinfo {author} {\bibfnamefont
  {P.}~\bibnamefont {Chen}}, \bibinfo {author} {\bibfnamefont {J.}~\bibnamefont
  {Krist}}, \bibinfo {author} {\bibfnamefont {B.}~\bibnamefont {Macintosh}},
  \bibinfo {author} {\bibfnamefont {R.}~\bibnamefont {Morgan}}, \bibinfo
  {author} {\bibfnamefont {L.}~\bibnamefont {Pueyo}}, \bibinfo {author}
  {\bibfnamefont {D.}~\bibnamefont {Sirbu}},\ and\ \bibinfo {author}
  {\bibfnamefont {K.}~\bibnamefont {Stapelfeldt}},\ }\href@noop {} {\bibinfo
  {title} {Cross-model validation of coronagraphic exposure time calculators
  for the {Habitable Worlds Observatory}: A report from the exoplanet science
  yield sub-working group}} (\bibinfo {year} {2025}),\ \Eprint
  {https://arxiv.org/abs/2502.18556} {arXiv:2502.18556 [astro-ph.IM]}
  \BibitemShut {NoStop}%
\bibitem [{\citenamefont {Ruffio}\ \emph {et~al.}(2026)\citenamefont {Ruffio},
  \citenamefont {Steiger}, \citenamefont {Spohn}, \citenamefont {Macintosh},
  \citenamefont {Mawet}, \citenamefont {Pueyo}, \citenamefont {Mennesson},
  \citenamefont {Dacus}, \citenamefont {Wolff}, \citenamefont {Robinson},
  \citenamefont {Hu}, \citenamefont {Hoch}, \citenamefont {Konopacky},
  \citenamefont {Perrin}, \citenamefont {Savransky}, \citenamefont {McElwain},
  \citenamefont {Wright}, \citenamefont {Wang},\ and\ \citenamefont
  {Chen}}]{Ruffio2026}%
  \BibitemOpen
  \bibfield  {author} {\bibinfo {author} {\bibfnamefont {J.-B.}\ \bibnamefont
  {Ruffio}}, \bibinfo {author} {\bibfnamefont {S.}~\bibnamefont {Steiger}},
  \bibinfo {author} {\bibfnamefont {C.}~\bibnamefont {Spohn}}, \bibinfo
  {author} {\bibfnamefont {B.}~\bibnamefont {Macintosh}}, \bibinfo {author}
  {\bibfnamefont {D.}~\bibnamefont {Mawet}}, \bibinfo {author} {\bibfnamefont
  {L.}~\bibnamefont {Pueyo}}, \bibinfo {author} {\bibfnamefont
  {B.}~\bibnamefont {Mennesson}}, \bibinfo {author} {\bibfnamefont
  {B.}~\bibnamefont {Dacus}}, \bibinfo {author} {\bibfnamefont
  {N.}~\bibnamefont {Wolff}}, \bibinfo {author} {\bibfnamefont {T.~D.}\
  \bibnamefont {Robinson}}, \bibinfo {author} {\bibfnamefont {R.}~\bibnamefont
  {Hu}}, \bibinfo {author} {\bibfnamefont {K.}~\bibnamefont {Hoch}}, \bibinfo
  {author} {\bibfnamefont {Q.~M.}\ \bibnamefont {Konopacky}}, \bibinfo {author}
  {\bibfnamefont {M.~D.}\ \bibnamefont {Perrin}}, \bibinfo {author}
  {\bibfnamefont {D.}~\bibnamefont {Savransky}}, \bibinfo {author}
  {\bibfnamefont {M.~W.}\ \bibnamefont {McElwain}}, \bibinfo {author}
  {\bibfnamefont {S.~A.}\ \bibnamefont {Wright}}, \bibinfo {author}
  {\bibfnamefont {J.}~\bibnamefont {Wang}},\ and\ \bibinfo {author}
  {\bibfnamefont {P.}~\bibnamefont {Chen}},\ }\bibfield  {title} {\bibinfo
  {title} {Characterizing {Earth} analogs may require a moderate- or
  high-resolution spectrograph},\ }\href
  {https://doi.org/10.1117/1.JATIS.12.4.041020} {\bibfield  {journal} {\bibinfo
   {journal} {JATIS}\ }\textbf {\bibinfo {volume} {12}},\ \bibinfo {pages}
  {041020} (\bibinfo {year} {2026})}\BibitemShut {NoStop}%
\bibitem [{\citenamefont {Tesch}\ \emph {et~al.}(2026)\citenamefont {Tesch},
  \citenamefont {Shi}, \citenamefont {Redding}, \citenamefont {Lou},\ and\
  \citenamefont {Jewell}}]{Tesch2026}%
  \BibitemOpen
  \bibfield  {author} {\bibinfo {author} {\bibfnamefont {J.}~\bibnamefont
  {Tesch}}, \bibinfo {author} {\bibfnamefont {F.}~\bibnamefont {Shi}}, \bibinfo
  {author} {\bibfnamefont {D.}~\bibnamefont {Redding}}, \bibinfo {author}
  {\bibfnamefont {J.}~\bibnamefont {Lou}},\ and\ \bibinfo {author}
  {\bibfnamefont {J.}~\bibnamefont {Jewell}},\ }\bibfield  {title} {\bibinfo
  {title} {{Habitable Worlds Observatory} wavefront sensing and control:
  methods for ultrastability},\ }\href
  {https://doi.org/10.1117/1.JATIS.12.4.041012} {\bibfield  {journal} {\bibinfo
   {journal} {JATIS}\ }\textbf {\bibinfo {volume} {12}},\ \bibinfo {pages}
  {041012} (\bibinfo {year} {2026})}\BibitemShut {NoStop}%
\bibitem [{\citenamefont {Pogorelyuk}\ \emph {et~al.}(2021)\citenamefont
  {Pogorelyuk}, \citenamefont {Pueyo}, \citenamefont {Males}, \citenamefont
  {Cahoy},\ and\ \citenamefont {Kasdin}}]{Pogorelyuk2021}%
  \BibitemOpen
  \bibfield  {author} {\bibinfo {author} {\bibfnamefont {L.}~\bibnamefont
  {Pogorelyuk}}, \bibinfo {author} {\bibfnamefont {L.}~\bibnamefont {Pueyo}},
  \bibinfo {author} {\bibfnamefont {J.~R.}\ \bibnamefont {Males}}, \bibinfo
  {author} {\bibfnamefont {K.}~\bibnamefont {Cahoy}},\ and\ \bibinfo {author}
  {\bibfnamefont {N.~J.}\ \bibnamefont {Kasdin}},\ }\bibfield  {title}
  {\bibinfo {title} {Information-theoretical limits of recursive estimation and
  closed-loop control in high-contrast imaging},\ }\href
  {https://doi.org/10.3847/1538-4365/ac126d} {\bibfield  {journal} {\bibinfo
  {journal} {Astrophys. J. Suppl. Ser.}\ }\textbf {\bibinfo {volume} {256}},\
  \bibinfo {pages} {39} (\bibinfo {year} {2021})}\BibitemShut {NoStop}%
\bibitem [{\citenamefont {Potier}\ \emph {et~al.}(2022)\citenamefont {Potier},
  \citenamefont {Ruane}, \citenamefont {Stark}, \citenamefont {Chen},
  \citenamefont {Chopra}, \citenamefont {Dewell}, \citenamefont
  {Juanola-Parramon}, \citenamefont {Nordt}, \citenamefont {Pueyo},
  \citenamefont {Redding}, \citenamefont {Riggs},\ and\ \citenamefont
  {Sirbu}}]{Potier2022}%
  \BibitemOpen
  \bibfield  {author} {\bibinfo {author} {\bibfnamefont {A.}~\bibnamefont
  {Potier}}, \bibinfo {author} {\bibfnamefont {G.}~\bibnamefont {Ruane}},
  \bibinfo {author} {\bibfnamefont {C.~C.}\ \bibnamefont {Stark}}, \bibinfo
  {author} {\bibfnamefont {P.}~\bibnamefont {Chen}}, \bibinfo {author}
  {\bibfnamefont {A.}~\bibnamefont {Chopra}}, \bibinfo {author} {\bibfnamefont
  {L.~D.}\ \bibnamefont {Dewell}}, \bibinfo {author} {\bibfnamefont
  {R.}~\bibnamefont {Juanola-Parramon}}, \bibinfo {author} {\bibfnamefont
  {A.~A.}\ \bibnamefont {Nordt}}, \bibinfo {author} {\bibfnamefont {L.~A.}\
  \bibnamefont {Pueyo}}, \bibinfo {author} {\bibfnamefont {D.~C.}\ \bibnamefont
  {Redding}}, \bibinfo {author} {\bibfnamefont {A.~J.~E.}\ \bibnamefont
  {Riggs}},\ and\ \bibinfo {author} {\bibfnamefont {D.}~\bibnamefont {Sirbu}},\
  }\bibfield  {title} {\bibinfo {title} {Adaptive optics performance of a
  simulated coronagraph instrument on a large, segmented space telescope in
  steady state},\ }\href {https://doi.org/10.1117/1.JATIS.8.3.035002}
  {\bibfield  {journal} {\bibinfo  {journal} {JATIS}\ }\textbf {\bibinfo
  {volume} {8}},\ \bibinfo {pages} {035002} (\bibinfo {year}
  {2022})}\BibitemShut {NoStop}%
\bibitem [{\citenamefont {Tsang}\ \emph {et~al.}(2016)\citenamefont {Tsang},
  \citenamefont {Nair},\ and\ \citenamefont {Lu}}]{Tsang2016}%
  \BibitemOpen
  \bibfield  {author} {\bibinfo {author} {\bibfnamefont {M.}~\bibnamefont
  {Tsang}}, \bibinfo {author} {\bibfnamefont {R.}~\bibnamefont {Nair}},\ and\
  \bibinfo {author} {\bibfnamefont {X.-M.}\ \bibnamefont {Lu}},\ }\bibfield
  {title} {\bibinfo {title} {Quantum theory of superresolution for two
  incoherent optical point sources},\ }\href
  {https://doi.org/10.1103/PhysRevX.6.031033} {\bibfield  {journal} {\bibinfo
  {journal} {Phys. Rev. X}\ }\textbf {\bibinfo {volume} {6}},\ \bibinfo {pages}
  {031033} (\bibinfo {year} {2016})}\BibitemShut {NoStop}%
\bibitem [{\citenamefont {Deshler}\ \emph {et~al.}(2025)\citenamefont
  {Deshler}, \citenamefont {Ozer}, \citenamefont {Ashok},\ and\ \citenamefont
  {Guha}}]{Deshler2025}%
  \BibitemOpen
  \bibfield  {author} {\bibinfo {author} {\bibfnamefont {N.}~\bibnamefont
  {Deshler}}, \bibinfo {author} {\bibfnamefont {I.}~\bibnamefont {Ozer}},
  \bibinfo {author} {\bibfnamefont {A.}~\bibnamefont {Ashok}},\ and\ \bibinfo
  {author} {\bibfnamefont {S.}~\bibnamefont {Guha}},\ }\bibfield  {title}
  {\bibinfo {title} {Experimental demonstration of a quantum-optimal
  coronagraph using spatial mode sorters},\ }\href
  {https://doi.org/10.1364/OPTICA.545414} {\bibfield  {journal} {\bibinfo
  {journal} {Optica}\ }\textbf {\bibinfo {volume} {12}},\ \bibinfo {pages}
  {518} (\bibinfo {year} {2025})}\BibitemShut {NoStop}%
\bibitem [{\citenamefont {Deshler}\ \emph {et~al.}(2026)\citenamefont
  {Deshler}, \citenamefont {Haffert},\ and\ \citenamefont
  {Ashok}}]{Deshler2026}%
  \BibitemOpen
  \bibfield  {author} {\bibinfo {author} {\bibfnamefont {N.}~\bibnamefont
  {Deshler}}, \bibinfo {author} {\bibfnamefont {S.}~\bibnamefont {Haffert}},\
  and\ \bibinfo {author} {\bibfnamefont {A.}~\bibnamefont {Ashok}},\ }\bibfield
   {title} {\bibinfo {title} {Quantum limits of exoplanet detection and
  localization},\ }\href {https://doi.org/10.1103/g6kr-mqdj} {\bibfield
  {journal} {\bibinfo  {journal} {Phys. Rev. A}\ }\textbf {\bibinfo {volume}
  {113}},\ \bibinfo {pages} {042406} (\bibinfo {year} {2026})}\BibitemShut
  {NoStop}%
\bibitem [{\citenamefont {Xin}\ \emph {et~al.}(2026{\natexlab{a}})\citenamefont
  {Xin}, \citenamefont {Haffert} \emph {et~al.}}]{Xin2026}%
  \BibitemOpen
  \bibfield  {author} {\bibinfo {author} {\bibfnamefont {Y.}~\bibnamefont
  {Xin}}, \bibinfo {author} {\bibfnamefont {S.}~\bibnamefont {Haffert}}, \emph
  {et~al.},\ }\href@noop {} {\bibinfo {title} {Quantum-optimal coronagraphy
  with spatial mode sorting for direct exoplanet observations}} (\bibinfo
  {year} {2026}{\natexlab{a}}),\ \Eprint {https://arxiv.org/abs/2607.02065}
  {arXiv:2607.02065 [astro-ph.IM]} \BibitemShut {NoStop}%
\bibitem [{\citenamefont {Xin}\ \emph {et~al.}(2026{\natexlab{b}})\citenamefont
  {Xin}, \citenamefont {Haffert},\ and\ \citenamefont {Landman}}]{XinHWO2026}%
  \BibitemOpen
  \bibfield  {author} {\bibinfo {author} {\bibfnamefont {Y.}~\bibnamefont
  {Xin}}, \bibinfo {author} {\bibfnamefont {S.~Y.}\ \bibnamefont {Haffert}},\
  and\ \bibinfo {author} {\bibfnamefont {R.}~\bibnamefont {Landman}},\
  }\bibfield  {title} {\bibinfo {title} {Optimal mode-sorting coronagraphy:
  limits of single-moded measurements for the {Habitable Worlds Observatory}},\
  }in\ \href {https://doi.org/10.1117/12.3101438} {\emph {\bibinfo {booktitle}
  {Optical and Infrared Interferometry and Imaging {X}}}},\ \bibinfo {series}
  {Proc. SPIE}, Vol.\ \bibinfo {volume} {14148}\ (\bibinfo {year} {2026})\ p.\
  \bibinfo {pages} {141483I}\BibitemShut {NoStop}%
\bibitem [{\citenamefont {Haffert}\ \emph {et~al.}(2026)\citenamefont
  {Haffert}, \citenamefont {Xin},\ and\ \citenamefont {Landman}}]{Haffert2026}%
  \BibitemOpen
  \bibfield  {author} {\bibinfo {author} {\bibfnamefont {S.~Y.}\ \bibnamefont
  {Haffert}}, \bibinfo {author} {\bibfnamefont {Y.}~\bibnamefont {Xin}},\ and\
  \bibinfo {author} {\bibfnamefont {R.}~\bibnamefont {Landman}},\ }\bibfield
  {title} {\bibinfo {title} {Quantum-optimal instruments for high-contrast
  imaging with multi-plane light converters},\ }in\ \href
  {https://doi.org/10.1117/12.3104662} {\emph {\bibinfo {booktitle} {Advances
  in Optical and Mechanical Technologies for Telescopes and Instrumentation
  {VII}}}},\ \bibinfo {series} {Proc. SPIE}, Vol.\ \bibinfo {volume} {14154}\
  (\bibinfo {year} {2026})\ p.\ \bibinfo {pages} {1415419}\BibitemShut
  {NoStop}%
\bibitem [{\citenamefont {Frazin}(2013)}]{Frazin2013}%
  \BibitemOpen
  \bibfield  {author} {\bibinfo {author} {\bibfnamefont {R.~A.}\ \bibnamefont
  {Frazin}},\ }\bibfield  {title} {\bibinfo {title} {Utilization of the
  wavefront sensor and short-exposure images for simultaneous estimation of
  quasi-static aberration and exoplanet intensity},\ }\href
  {https://doi.org/10.1088/0004-637X/767/1/21} {\bibfield  {journal} {\bibinfo
  {journal} {ApJ}\ }\textbf {\bibinfo {volume} {767}},\ \bibinfo {pages} {21}
  (\bibinfo {year} {2013})}\BibitemShut {NoStop}%
\bibitem [{\citenamefont {Rodack}\ \emph {et~al.}(2021)\citenamefont {Rodack},
  \citenamefont {Frazin}, \citenamefont {Males},\ and\ \citenamefont
  {Guyon}}]{Rodack2021}%
  \BibitemOpen
  \bibfield  {author} {\bibinfo {author} {\bibfnamefont {A.~T.}\ \bibnamefont
  {Rodack}}, \bibinfo {author} {\bibfnamefont {R.~A.}\ \bibnamefont {Frazin}},
  \bibinfo {author} {\bibfnamefont {J.~R.}\ \bibnamefont {Males}},\ and\
  \bibinfo {author} {\bibfnamefont {O.}~\bibnamefont {Guyon}},\ }\bibfield
  {title} {\bibinfo {title} {Millisecond exoplanet imaging: {I}. method and
  simulation results},\ }\href {https://doi.org/10.1364/JOSAA.426046}
  {\bibfield  {journal} {\bibinfo  {journal} {J. Opt. Soc. Am. A}\ }\textbf
  {\bibinfo {volume} {38}},\ \bibinfo {pages} {1541} (\bibinfo {year}
  {2021})}\BibitemShut {NoStop}%
\bibitem [{\citenamefont {Xin}\ \emph {et~al.}(2024)\citenamefont {Xin},
  \citenamefont {Pueyo}, \citenamefont {Laugier}, \citenamefont {Pogorelyuk},
  \citenamefont {Douglas}, \citenamefont {Pope},\ and\ \citenamefont
  {Cahoy}}]{XinWFE2024}%
  \BibitemOpen
  \bibfield  {author} {\bibinfo {author} {\bibfnamefont {Y.}~\bibnamefont
  {Xin}}, \bibinfo {author} {\bibfnamefont {L.}~\bibnamefont {Pueyo}}, \bibinfo
  {author} {\bibfnamefont {R.}~\bibnamefont {Laugier}}, \bibinfo {author}
  {\bibfnamefont {L.}~\bibnamefont {Pogorelyuk}}, \bibinfo {author}
  {\bibfnamefont {E.~S.}\ \bibnamefont {Douglas}}, \bibinfo {author}
  {\bibfnamefont {B.~J.~S.}\ \bibnamefont {Pope}},\ and\ \bibinfo {author}
  {\bibfnamefont {K.~L.}\ \bibnamefont {Cahoy}},\ }\bibfield  {title} {\bibinfo
  {title} {Coronagraphic data post-processing using projections on instrumental
  modes},\ }\href {https://doi.org/10.3847/1538-4357/ad1879} {\bibfield
  {journal} {\bibinfo  {journal} {ApJ}\ }\textbf {\bibinfo {volume} {963}},\
  \bibinfo {pages} {96} (\bibinfo {year} {2024})}\BibitemShut {NoStop}%
\bibitem [{\citenamefont {Ballester}\ \emph {et~al.}(2008)\citenamefont
  {Ballester}, \citenamefont {Wehner},\ and\ \citenamefont
  {Winter}}]{Ballester2008}%
  \BibitemOpen
  \bibfield  {author} {\bibinfo {author} {\bibfnamefont {M.~A.}\ \bibnamefont
  {Ballester}}, \bibinfo {author} {\bibfnamefont {S.}~\bibnamefont {Wehner}},\
  and\ \bibinfo {author} {\bibfnamefont {A.}~\bibnamefont {Winter}},\
  }\bibfield  {title} {\bibinfo {title} {State discrimination with
  post-measurement information},\ }\href
  {https://doi.org/10.1109/TIT.2008.928276} {\bibfield  {journal} {\bibinfo
  {journal} {IEEE Trans. Inf. Theory}\ }\textbf {\bibinfo {volume} {54}},\
  \bibinfo {pages} {4183} (\bibinfo {year} {2008})}\BibitemShut {NoStop}%
\bibitem [{\citenamefont {Carmeli}\ \emph {et~al.}(2018)\citenamefont
  {Carmeli}, \citenamefont {Heinosaari},\ and\ \citenamefont
  {Toigo}}]{Carmeli2018}%
  \BibitemOpen
  \bibfield  {author} {\bibinfo {author} {\bibfnamefont {C.}~\bibnamefont
  {Carmeli}}, \bibinfo {author} {\bibfnamefont {T.}~\bibnamefont
  {Heinosaari}},\ and\ \bibinfo {author} {\bibfnamefont {A.}~\bibnamefont
  {Toigo}},\ }\bibfield  {title} {\bibinfo {title} {State discrimination with
  postmeasurement information and incompatibility of quantum measurements},\
  }\href {https://doi.org/10.1103/PhysRevA.98.012126} {\bibfield  {journal}
  {\bibinfo  {journal} {Phys. Rev. A}\ }\textbf {\bibinfo {volume} {98}},\
  \bibinfo {pages} {012126} (\bibinfo {year} {2018})}\BibitemShut {NoStop}%
\bibitem [{\citenamefont {Choi}\ \emph {et~al.}(2026)\citenamefont {Choi},
  \citenamefont {Baac}, \citenamefont {Jacob},\ and\ \citenamefont
  {Chung}}]{Choi2026}%
  \BibitemOpen
  \bibfield  {author} {\bibinfo {author} {\bibfnamefont {H.}~\bibnamefont
  {Choi}}, \bibinfo {author} {\bibfnamefont {H.~W.}\ \bibnamefont {Baac}},
  \bibinfo {author} {\bibfnamefont {Z.}~\bibnamefont {Jacob}},\ and\ \bibinfo
  {author} {\bibfnamefont {H.}~\bibnamefont {Chung}},\ }\href@noop {} {\bibinfo
  {title} {Exoplanet detection using adaptive quantum-optimal measurement}}
  (\bibinfo {year} {2026}),\ \Eprint {https://arxiv.org/abs/2607.06931v1}
  {arXiv:2607.06931v1 [physics.optics]} \BibitemShut {NoStop}%
\bibitem [{\citenamefont {Turyshev}(2026)}]{Turyshev2026Benchmarks}%
  \BibitemOpen
  \bibfield  {author} {\bibinfo {author} {\bibfnamefont {S.~G.}\ \bibnamefont
  {Turyshev}},\ }\href@noop {} {\bibinfo {title} {{Do Quantum Measurement
  Advantages Survive to Astrophysical Inference? Seven Benchmarks in Optical
  Interferometry and Imaging}}} (\bibinfo {year} {2026}),\ \Eprint
  {https://arxiv.org/abs/2609.11974} {arXiv:2609.11974 [physics.gen-ph]}
  \BibitemShut {NoStop}%
\bibitem [{\citenamefont {Braunstein}\ and\ \citenamefont
  {Caves}(1994)}]{BraunsteinCaves1994}%
  \BibitemOpen
  \bibfield  {author} {\bibinfo {author} {\bibfnamefont {S.~L.}\ \bibnamefont
  {Braunstein}}\ and\ \bibinfo {author} {\bibfnamefont {C.~M.}\ \bibnamefont
  {Caves}},\ }\bibfield  {title} {\bibinfo {title} {Statistical distance and
  the geometry of quantum states},\ }\href
  {https://doi.org/10.1103/PhysRevLett.72.3439} {\bibfield  {journal} {\bibinfo
   {journal} {PRL}\ }\textbf {\bibinfo {volume} {72}},\ \bibinfo {pages} {3439}
  (\bibinfo {year} {1994})}\BibitemShut {NoStop}%
\bibitem [{\citenamefont {Huang}\ \emph {et~al.}(2023)\citenamefont {Huang},
  \citenamefont {Schwab},\ and\ \citenamefont {Lupo}}]{Huang2023}%
  \BibitemOpen
  \bibfield  {author} {\bibinfo {author} {\bibfnamefont {Z.}~\bibnamefont
  {Huang}}, \bibinfo {author} {\bibfnamefont {C.}~\bibnamefont {Schwab}},\ and\
  \bibinfo {author} {\bibfnamefont {C.}~\bibnamefont {Lupo}},\ }\bibfield
  {title} {\bibinfo {title} {Ultimate limits of exoplanet spectroscopy: A
  quantum approach},\ }\href {https://doi.org/10.1103/PhysRevA.107.022409}
  {\bibfield  {journal} {\bibinfo  {journal} {Phys. Rev. A}\ }\textbf {\bibinfo
  {volume} {107}},\ \bibinfo {pages} {022409} (\bibinfo {year}
  {2023})}\BibitemShut {NoStop}%
\bibitem [{\citenamefont {Santamaria}\ \emph {et~al.}(2025)\citenamefont
  {Santamaria}, \citenamefont {Sgobba}, \citenamefont {Pallotti},\ and\
  \citenamefont {Lupo}}]{Santamaria2025}%
  \BibitemOpen
  \bibfield  {author} {\bibinfo {author} {\bibfnamefont {L.}~\bibnamefont
  {Santamaria}}, \bibinfo {author} {\bibfnamefont {F.}~\bibnamefont {Sgobba}},
  \bibinfo {author} {\bibfnamefont {D.}~\bibnamefont {Pallotti}},\ and\
  \bibinfo {author} {\bibfnamefont {C.}~\bibnamefont {Lupo}},\ }\bibfield
  {title} {\bibinfo {title} {Single-photon super-resolved spectroscopy from
  spatial-mode demultiplexing},\ }\href {https://doi.org/10.1364/PRJ.544197}
  {\bibfield  {journal} {\bibinfo  {journal} {Photonics Res.}\ }\textbf
  {\bibinfo {volume} {13}},\ \bibinfo {pages} {865} (\bibinfo {year}
  {2025})}\BibitemShut {NoStop}%
\bibitem [{\citenamefont {Gerard}\ \emph {et~al.}(2026)\citenamefont {Gerard},
  \citenamefont {Geringer-Sameth}, \citenamefont {Sengupta}, \citenamefont
  {Perloff}, \citenamefont {Sanchez}, \citenamefont {Waswa}, \citenamefont
  {Laguna}, \citenamefont {Jensen-Clem}, \citenamefont {Poyneer},\ and\
  \citenamefont {Eckart}}]{Gerard2026}%
  \BibitemOpen
  \bibfield  {author} {\bibinfo {author} {\bibfnamefont {B.~L.}\ \bibnamefont
  {Gerard}}, \bibinfo {author} {\bibfnamefont {A.}~\bibnamefont
  {Geringer-Sameth}}, \bibinfo {author} {\bibfnamefont {A.~R.}\ \bibnamefont
  {Sengupta}}, \bibinfo {author} {\bibfnamefont {A.}~\bibnamefont {Perloff}},
  \bibinfo {author} {\bibfnamefont {D.~F.}\ \bibnamefont {Sanchez}}, \bibinfo
  {author} {\bibfnamefont {P.}~\bibnamefont {Waswa}}, \bibinfo {author}
  {\bibfnamefont {C.}~\bibnamefont {Laguna}}, \bibinfo {author} {\bibfnamefont
  {R.}~\bibnamefont {Jensen-Clem}}, \bibinfo {author} {\bibfnamefont
  {L.}~\bibnamefont {Poyneer}},\ and\ \bibinfo {author} {\bibfnamefont
  {M.}~\bibnamefont {Eckart}},\ }\bibfield  {title} {\bibinfo {title}
  {{WaveDriver}: a laser guide star {AO} system for {HWO}},\ }\href
  {https://doi.org/10.1117/1.JATIS.12.4.041015} {\bibfield  {journal} {\bibinfo
   {journal} {J. Astron. Telesc. Instrum. Syst.}\ }\textbf {\bibinfo {volume}
  {12}},\ \bibinfo {pages} {041015} (\bibinfo {year} {2026})}\BibitemShut
  {NoStop}%
\bibitem [{\citenamefont {Dupuis}\ \emph {et~al.}(2015)\citenamefont {Dupuis},
  \citenamefont {Lee}, \citenamefont {Rylyakov}, \citenamefont {Kuchta},
  \citenamefont {Baks}, \citenamefont {Orcutt}, \citenamefont {Gill},
  \citenamefont {Green},\ and\ \citenamefont {Schow}}]{Dupuis2015}%
  \BibitemOpen
  \bibfield  {author} {\bibinfo {author} {\bibfnamefont {N.}~\bibnamefont
  {Dupuis}}, \bibinfo {author} {\bibfnamefont {B.~G.}\ \bibnamefont {Lee}},
  \bibinfo {author} {\bibfnamefont {A.~V.}\ \bibnamefont {Rylyakov}}, \bibinfo
  {author} {\bibfnamefont {D.~M.}\ \bibnamefont {Kuchta}}, \bibinfo {author}
  {\bibfnamefont {C.~W.}\ \bibnamefont {Baks}}, \bibinfo {author}
  {\bibfnamefont {J.~S.}\ \bibnamefont {Orcutt}}, \bibinfo {author}
  {\bibfnamefont {D.~M.}\ \bibnamefont {Gill}}, \bibinfo {author}
  {\bibfnamefont {W.~M.~J.}\ \bibnamefont {Green}},\ and\ \bibinfo {author}
  {\bibfnamefont {C.~L.}\ \bibnamefont {Schow}},\ }\bibfield  {title} {\bibinfo
  {title} {Design and fabrication of low-insertion-loss and low-crosstalk
  broadband {$2\times2$} {Mach--Zehnder} silicon photonic switches},\ }\href
  {https://doi.org/10.1109/JLT.2015.2446463} {\bibfield  {journal} {\bibinfo
  {journal} {J. Lightwave Technol.}\ }\textbf {\bibinfo {volume} {33}},\
  \bibinfo {pages} {3597} (\bibinfo {year} {2015})}\BibitemShut {NoStop}%
\bibitem [{\citenamefont {Xin}\ \emph {et~al.}(2022)\citenamefont {Xin},
  \citenamefont {Jovanovic}, \citenamefont {Ruane}, \citenamefont {Mawet},
  \citenamefont {Fitzgerald}, \citenamefont {Echeverri}, \citenamefont {Lin},
  \citenamefont {Leon-Saval}, \citenamefont {Gatkine}, \citenamefont {Kim},
  \citenamefont {Norris},\ and\ \citenamefont {Sallum}}]{Xin2022}%
  \BibitemOpen
  \bibfield  {author} {\bibinfo {author} {\bibfnamefont {Y.}~\bibnamefont
  {Xin}}, \bibinfo {author} {\bibfnamefont {N.}~\bibnamefont {Jovanovic}},
  \bibinfo {author} {\bibfnamefont {G.}~\bibnamefont {Ruane}}, \bibinfo
  {author} {\bibfnamefont {D.}~\bibnamefont {Mawet}}, \bibinfo {author}
  {\bibfnamefont {M.~P.}\ \bibnamefont {Fitzgerald}}, \bibinfo {author}
  {\bibfnamefont {D.}~\bibnamefont {Echeverri}}, \bibinfo {author}
  {\bibfnamefont {J.}~\bibnamefont {Lin}}, \bibinfo {author} {\bibfnamefont
  {S.~G.}\ \bibnamefont {Leon-Saval}}, \bibinfo {author} {\bibfnamefont
  {P.}~\bibnamefont {Gatkine}}, \bibinfo {author} {\bibfnamefont {Y.~J.}\
  \bibnamefont {Kim}}, \bibinfo {author} {\bibfnamefont {B.}~\bibnamefont
  {Norris}},\ and\ \bibinfo {author} {\bibfnamefont {S.}~\bibnamefont
  {Sallum}},\ }\bibfield  {title} {\bibinfo {title} {Efficient detection and
  characterization of exoplanets within the diffraction limit: Nulling with a
  mode-selective photonic lantern},\ }\href
  {https://doi.org/10.3847/1538-4357/ac9284} {\bibfield  {journal} {\bibinfo
  {journal} {ApJ}\ }\textbf {\bibinfo {volume} {938}},\ \bibinfo {pages} {140}
  (\bibinfo {year} {2022})}\BibitemShut {NoStop}%
\bibitem [{\citenamefont {Sirbu}\ \emph {et~al.}(2024)\citenamefont {Sirbu},
  \citenamefont {Belikov}, \citenamefont {Fogarty}, \citenamefont {Valdez},
  \citenamefont {Sun}, \citenamefont {Kroo}, \citenamefont {Solgaard},
  \citenamefont {Miller},\ and\ \citenamefont {Guyon}}]{Sirbu2024}%
  \BibitemOpen
  \bibfield  {author} {\bibinfo {author} {\bibfnamefont {D.}~\bibnamefont
  {Sirbu}}, \bibinfo {author} {\bibfnamefont {R.}~\bibnamefont {Belikov}},
  \bibinfo {author} {\bibfnamefont {K.}~\bibnamefont {Fogarty}}, \bibinfo
  {author} {\bibfnamefont {C.}~\bibnamefont {Valdez}}, \bibinfo {author}
  {\bibfnamefont {Z.}~\bibnamefont {Sun}}, \bibinfo {author} {\bibfnamefont
  {A.}~\bibnamefont {Kroo}}, \bibinfo {author} {\bibfnamefont {O.}~\bibnamefont
  {Solgaard}}, \bibinfo {author} {\bibfnamefont {D.~A.~B.}\ \bibnamefont
  {Miller}},\ and\ \bibinfo {author} {\bibfnamefont {O.}~\bibnamefont
  {Guyon}},\ }\bibfield  {title} {\bibinfo {title} {{AstroPIC}: near-infrared
  photonic integrated circuit coronagraph architecture for the {Habitable
  Worlds Observatory}},\ }in\ \href {https://doi.org/10.1117/12.3020518} {\emph
  {\bibinfo {booktitle} {Space Telescopes and Instrumentation 2024: Optical,
  Infrared, and Millimeter Wave}}},\ \bibinfo {series} {Proc. SPIE}, Vol.\
  \bibinfo {volume} {13092}\ (\bibinfo {year} {2024})\ p.\ \bibinfo {pages}
  {130921T}\BibitemShut {NoStop}%
\bibitem [{\citenamefont {Cox}(1955)}]{Cox1955}%
  \BibitemOpen
  \bibfield  {author} {\bibinfo {author} {\bibfnamefont {D.~R.}\ \bibnamefont
  {Cox}},\ }\bibfield  {title} {\bibinfo {title} {Some statistical methods
  connected with series of events},\ }\href
  {https://doi.org/10.1111/j.2517-6161.1955.tb00188.x} {\bibfield  {journal}
  {\bibinfo  {journal} {J. R. Stat. Soc. Ser. B}\ }\textbf {\bibinfo {volume}
  {17}},\ \bibinfo {pages} {129} (\bibinfo {year} {1955})}\BibitemShut
  {NoStop}%
\bibitem [{\citenamefont {Tsang}\ \emph {et~al.}(2020)\citenamefont {Tsang},
  \citenamefont {Albarelli},\ and\ \citenamefont {Datta}}]{Tsang2020}%
  \BibitemOpen
  \bibfield  {author} {\bibinfo {author} {\bibfnamefont {M.}~\bibnamefont
  {Tsang}}, \bibinfo {author} {\bibfnamefont {F.}~\bibnamefont {Albarelli}},\
  and\ \bibinfo {author} {\bibfnamefont {A.}~\bibnamefont {Datta}},\ }\bibfield
   {title} {\bibinfo {title} {Quantum semiparametric estimation},\ }\href
  {https://doi.org/10.1103/PhysRevX.10.031023} {\bibfield  {journal} {\bibinfo
  {journal} {Phys. Rev. X}\ }\textbf {\bibinfo {volume} {10}},\ \bibinfo
  {pages} {031023} (\bibinfo {year} {2020})}\BibitemShut {NoStop}%
\bibitem [{\citenamefont {Linowski}\ \emph {et~al.}(2025)\citenamefont
  {Linowski}, \citenamefont {Schlichtholz},\ and\ \citenamefont
  {Sorelli}}]{Linowski2025}%
  \BibitemOpen
  \bibfield  {author} {\bibinfo {author} {\bibfnamefont {T.}~\bibnamefont
  {Linowski}}, \bibinfo {author} {\bibfnamefont {K.}~\bibnamefont
  {Schlichtholz}},\ and\ \bibinfo {author} {\bibfnamefont {G.}~\bibnamefont
  {Sorelli}},\ }\bibfield  {title} {\bibinfo {title} {Quantum-inspired
  exoplanet detection in the presence of experimental imperfections},\ }\href
  {https://doi.org/10.1103/qq99-jmpv} {\bibfield  {journal} {\bibinfo
  {journal} {Phys. Rev. Appl.}\ }\textbf {\bibinfo {volume} {24}},\ \bibinfo
  {pages} {044052} (\bibinfo {year} {2025})}\BibitemShut {NoStop}%
\end{thebibliography}

%

\end{document}